\newtheorem{theorem}{Theorem}
\newtheorem{lemma}{Lemma}
\newtheorem{corollary}{Corollary}
\newtheorem{proposition}{Proposition}
\newtheorem{conjecture}{Conjecture}
\newcommand{\old}[1]{{}}
\newcommand{\etal}{et~al.}
\newcommand{\RR}{\mathbb{R}}
\newcommand{\ZZ}{\mathbb{Z}}
\def\F{\mathcal F}
\def\I{\mathcal I}
\def\L{\mathcal L}
\def\S{\mathcal S}
\def\T{\mathcal T}
\begin{document}

\title{\bf Piercing translates and homothets of a convex body\footnote{%
A preliminary version of this paper appeared in the
Proceedings of the 17th Annual European Symposium on Algorithms (ESA 2009),
pages 131--142.}}

\author{%
Adrian Dumitrescu\footnote{Department of Computer Science,
University of Wisconsin--Milwaukee,
WI 53201-0784, USA\@.
Email: \texttt{ad@cs.uwm.edu}.
Supported in part by NSF CAREER grant CCF-0444188.}
\and
Minghui Jiang\footnote{Department of Computer Science,
Utah State University, Logan, UT 84322-4205, USA\@.
Email: \texttt{mjiang@cc.usu.edu}.
Supported in part by NSF grant DBI-0743670.}}

\maketitle

\begin{abstract}
According to a classical result of Gr\"unbaum,
the transversal number $\tau(\F)$
of any family $\F$ of pairwise-intersecting translates or homothets
of a convex body $C$ in $\RR^d$
is bounded by a function of $d$.
Denote by $\alpha(C)$ (resp.~$\beta(C)$) the supremum of the ratio
of the transversal number $\tau(\F)$ to the packing number $\nu(\F)$
over all families $\F$ of translates (resp.~homothets)
of a convex body $C$ in $\RR^d$.
Kim \etal\ recently showed that
$\alpha(C)$ is bounded by a function of $d$
for any convex body $C$ in $\RR^d$,
and gave the first bounds on $\alpha(C)$ for convex bodies $C$ in $\RR^d$
and on $\beta(C)$ for convex bodies $C$ in the plane.

Here
we show that $\beta(C)$ is also bounded by a function of $d$
for any convex body $C$ in $\RR^d$,
and present new or improved bounds on both $\alpha(C)$ and $\beta(C)$
for various convex bodies $C$ in $\RR^d$ for all dimensions $d$. 
Our techniques explore interesting inequalities
linking the covering and packing densities of a convex body. 
Our methods for obtaining upper bounds are constructive and lead to
efficient constant-factor 
approximation algorithms for finding a minimum-cardinality point set that
pierces a set of translates or homothets of a convex body. 
\end{abstract}

\noindent\textbf{Keywords}:
Geometric transversals, Gallai-type problems,
packing and covering, approximation algorithms.


\section{Introduction}

A \emph{convex body} is a compact convex set in $\RR^d$ with nonempty interior.
Let $\F$ be a family of convex bodies.
The \emph{packing number} $\nu(\F)$
is the maximum cardinality of a set of pairwise-disjoint convex bodies in $\F$,
and the \emph{transversal number} $\tau(\F)$
is the minimum cardinality of a set of points that intersects
every convex body in $\F$.

Let $G$ be the \emph{intersection graph} of $\F$
with one vertex for each convex body in $\F$
and with an edge between two vertices
if and only if the two corresponding convex bodies intersect.
The \emph{independence number} $\alpha(G)$
is the maximum cardinality of an independent set in $G$.
The \emph{clique partition number} $\vartheta(G)$
is the minimum number of classes
in a partition of the vertices of $G$ into cliques.
Since a set of pairwise-disjoint convex bodies in $\F$ corresponds to
an independent set in $G$, we have $\nu(\F) = \alpha(G)$.
Also, since any subset of convex bodies in $\F$ that share a common point
corresponds to a clique in $G$, we have $\tau(\F) \ge \vartheta(G)$.
For the special case that $\F$ is a family of axis-parallel boxes in $\RR^d$,
we indeed have $\tau(\F) = \vartheta(G)$
since any subset of pairwise-intersecting boxes share a common point.
In general, we clearly have the inequality $\vartheta(G) \ge \alpha(G)$,
thus also $\tau(\F) \ge \nu(\F)$. But what else can be said about the
relation between $\tau(\F)$ and $\nu(\F)$?

\begin{figure}[htbp]
\centering
\includegraphics{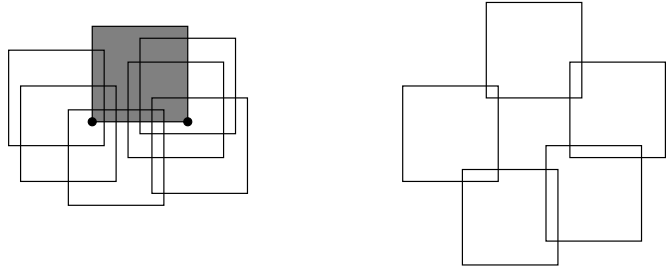}
\caption{\small
Piercing a family $\F$ of axis-parallel unit squares.
Left: all squares that intersect the highest (shaded) square
contain one of its two lower vertices.
Right: five squares form a 5-cycle.}
\label{fig:s}
\end{figure}

For example, let $\F$ be any family of axis-parallel unit squares in the plane,
and refer to Figure~\ref{fig:s}.
One can obtain a subset of pairwise-disjoint squares
by repeatedly selecting the highest square that does not intersect the
previously selected squares.
Then $\F$ is pierced by the set of points
consisting of the two lower vertices of each square in the subset.
This implies that $\tau(\F) \le 2\cdot \nu(\F)$.
The factor of $2$ cannot be improved below $\frac32$ since $\tau(\F) =
3$ and $\nu(\F) = 2$ for a family $\F$ of five squares arranged into a
5-cycle~\cite{GL85}.

For a convex body $C$ in $\RR^d$, $d \ge 2$,
define
$$
\alpha(C) = \sup_{\F_t} \frac{\tau(\F_t)}{\nu(\F_t)}
\quad\textrm{and}\quad
\beta(C) = \sup_{\F_h} \frac{\tau(\F_h)}{\nu(\F_h)},
$$
where $\F_t$ ranges over all families of translates of $C$,
and $\F_h$ ranges over all families of (positive) homothets of $C$.
Our previous discussion (Figure~\ref{fig:s}) yields the bounds
$\frac32 \le \alpha(C) \le 2$ for any square $C$.

Define $\alpha_1(C)$ (resp.~$\beta_1(C)$) as the smallest number $k$ such that
for any family $\F$ of \emph{pairwise-intersecting}
translates (resp.~homothets) of a convex body $C$,
there exists a set of $k$ points that intersects every member of $\F$.
Note that $\alpha$ and $\beta$ generalize $\alpha_1$ and $\beta_1$.
For any convex body $C$,
the four numbers $\alpha(C)$, $\beta(C)$, $\alpha_1(C)$, and $\beta_1(C)$
are invariant under any non-singular affine transformation of $C$,
and we have the four inequalities 
$\alpha_1(C) \le \alpha(C)$, 
$\beta_1(C) \le \beta(C)$,
$\alpha_1(C) \le \beta_1(C)$,
and $\alpha(C) \le \beta(C)$.

Gr\"unbaum~\cite{Gr59} showed that,
for any convex body $C$ in $\RR^d$,
both $\alpha_1(C)$ and $\beta_1(C)$ are bounded by functions of $d$.
Deriving bounds on $\alpha_1(C)$ and $\beta_1(C)$ for various types
of convex bodies $C$ in $\RR^d$ is typical of classic Gallai-type
problems~\cite{DGK63,We04}, and has been extensively studied.
For example, a result by Karasev~\cite{Ka00} states that
$\alpha_1(C) \le 3$ for any convex body $C$ in the plane,
i.e., for any family of pairwise-intersecting translates
of a convex body in the plane, there always exists a set of three
points that intersects every member of the family.
It is folklore that
$\alpha_1(C) = \beta_1(C) = 1$ for any parallelogram $C$
	(see~\cite{Gr59} and the references therein).
Also,
$\alpha_1(C) = 2$ for any affinely regular hexagon $C$~\cite[Example~2]{Gr59},
$\alpha_1(C) = \beta_1(C) = 3$ for any triangle $C$~\cite{CS67}\footnote{%
	We give a simpler construction for the lower bound
	$\alpha_1(C) \ge 3$ for any triangle $C$ in Appendix~\ref{sec:t1}.},
$\alpha_1(C) = 3 < 4 = \beta_1(C)$ for any (circular) disk $C$~\cite{Gr59,Da86},
and
$\beta_1(C) \le 7$ for any centrally symmetric convex body $C$
in the plane~\cite{Gr59}.
Perhaps the most celebrated result on point transversals of convex sets
is Alon and Kleitman's solution to the Hadwiger-Debrunner
$(p,q)$-problem~\cite{AK92}.
We refer to the two surveys~\cite[pp.~142--150]{DGK63}
and~\cite[pp.~77--78]{We04}
for more related results.

The two numbers $\alpha_1(C)$ and $\beta_1(C)$ bound the values of $\tau(\F)$
for special families $\F$ of translates and homothets, respectively,
of a convex body $C$ with $\nu(\F) = 1$.
It is thus natural to study the general case $\nu(\F) \ge 1$,
and to obtain estimates on $\alpha(C)$ and $\beta(C)$.
Despite the many previous bounds on $\alpha_1(C)$
and $\beta_1(C)$~\cite{DGK63,We04}, 
first estimates on $\alpha(C)$ and $\beta(C)$ have been only obtained
recently~\cite{KNPS06}.
Note that the related problem for families of $d$-intervals (which are nonconvex)
has been extensively studied~\cite{Ta95,KT96,Ka97,Al98,Ma01}.

Kim \etal~\cite{KNPS06} showed that
$\alpha(C)$ is bounded by a function of $d$
for any convex body $C$ in $\RR^d$,
and gave the first bounds on $\alpha(C)$ for convex bodies $C$ in $\RR^d$
and on $\beta(C)$ for convex bodies $C$ in the plane.
In this paper,
we show that $\beta(C)$ is also bounded by a function of $d$
for any convex body $C$ in $\RR^d$,
and present new or improved bounds on both $\alpha(C)$ and $\beta(C)$
for various convex bodies $C$ in $\RR^d$ for all dimensions $d$. 

Note that in the definitions of $\alpha$ and $\beta$,
both the convexity of $C$ and the homothety of $\F_t$ and $\F_h$
are necessary for the values $\alpha(C)$ and $\beta(C)$ to be bounded.
To see the necessity of convexity, let $C$ be the union of
a vertical line segment with endpoints $(0,0)$ and $(0,1)$
and a horizontal line segment with endpoints $(0,0)$ and $(1,0)$,
where the shared endpoint $(0,0)$ is the \emph{corner},
and let $\F$ be a family of $n$ translates of $C$
with corners at $(i/n, -i/n)$, $1 \le i \le n$~\cite{GL85}.
Then at least $\lceil n/2 \rceil$ points are required
to intersect every member of $\F$. To see the necessity of homothety,
let $\F$ be a family of $n$ pairwise intersecting line segments
(or very thin rectangles) in the plane such that no three have a common point.
Then again at least $\lceil n/2 \rceil$ points are required to
intersect every member of $\F$. 

\paragraph{Definitions.}

For a convex body $C$ in $\RR^d$,
denote by $|C|$ the Lebesgue measure of $C$,
i.e., the area in the plane, or the volume in $d$-space for $d \ge 3$.
For a family $\F$ of convex bodies in $\RR^d$,
denote by $|\F|$ the Lebesgue measure of the union of the convex bodies in
$\F$, i.e., $|\bigcup_{C \in \F} C|$.

For two convex bodies $A$ and $B$ in $\RR^d$,
denote by $A+B = \{ a+b \mid a\in A, b\in B \}$ the Minkowski sum
of $A$ and $B$. For a convex body $C$ in $\RR^d$,
denote by $\lambda C = \{ \lambda c \mid c \in C \}$
the \emph{scaled copy} of $C$ by a factor of $\lambda \in \RR$,
denote by $-C = \{ -c \mid c \in C \}$
the \emph{reflexion} of $C$ about the origin,
and denote by $C+a = \{ c+a \mid c\in C \}$
the \emph{translate} of $C$ by the vector from the origin to $a$.
Write $C-C$ for $C + (-C)$.

For two parallelepipeds $P$ and $Q$ in $\RR^d$ that are parallel to each other
(but are not necessarily axis-parallel or orthogonal),
denote by $\lambda_i(P,Q)$, $1 \le i \le d$,
the length ratios of the edges of $Q$ to
the corresponding parallel edges of $P$.
Then, for a convex body $C$ in $\RR^d$, define
$$
\gamma(C) = \min_{P,Q}
\left(
\left\lceil \lambda_d(P,Q) \right\rceil
\prod_{i=1}^{d-1}
\left\lceil \lambda_i(P,Q) + 1 \right\rceil
\right),
$$
where $P$ and $Q$ range over all pairs of parallelepipeds in $\RR^d$
that are parallel to each other, such that $P \subseteq C \subseteq Q$.
Note that in this case $\lambda_i(P,Q) \geq 1$ for $1 \le i \le d$.

We review some standard definitions of packing and covering densities
in the following; see~\cite[Chapter~1]{BMP05}.
A family $\F$ of convex bodies is a \emph{packing}
in a domain $Y \subseteq \RR^d$ if
$\bigcup_{C \in \F} C \subseteq Y$
and the convex bodies in $\F$ are pairwise-disjoint;
$\F$ is a \emph{covering} of $Y$ if $Y \subseteq \bigcup_{C \in \F} C$.
The \emph{density} of a family $\F$ relative to a bounded domain $Y$ is
$\rho(\F, Y) = (\sum_{C \in \F} |C|)/|Y|$.
If $Y = \RR^d$ is the whole space, then the \emph{upper density}
and the \emph{lower density} of $\F$ are, respectively,
$$
\overline\rho(\F, \RR^d) = \limsup_{r\to\infty} \rho(\F, B^d(r))
\quad\textup{and}\quad
\underline\rho(\F, \RR^d) = \liminf_{r\to\infty} \rho(\F, B^d(r)),
$$
where $B^d(r)$ denote a ball of radius $r$ centered at the origin
(since we are taking the limit as $r\to\infty$,
a hypercube of side length $r$ can be used instead of a ball of radius $r$).
For a convex body $C$ in $\RR^d$,
define the \emph{packing density} of $C$ as
$$
\delta(C) = \sup_{\F~\textrm{packing}} \overline\rho(\F, \RR^d),
$$
where $\F$ ranges over all packings in $\RR^d$ with congruent copies of $C$,
and define the \emph{covering density} of $C$ as
$$
\theta(C) = \inf_{\F~\textrm{covering}} \underline\rho(\F, \RR^d),
$$
where $\F$ ranges over all coverings of $\RR^d$ with congruent copies of $C$.
If the members of $\F$ are restricted to translates of $C$,
then we have the \emph{translative} packing and covering densities
$\delta_T(C)$ and $\theta_T(C)$.
If the members of $\F$ are further restricted to translates of $C$
by vectors of a lattice,
then we have the \emph{lattice} packing and covering densities
$\delta_L(C)$ and $\theta_L(C)$.
Note that the four densities
$\theta_T(C)$, $\theta_L(C)$, $\delta_T(C)$, and $\delta_L(C)$
are invariant under any non-singular affine transformation of $C$.
For any convex body $C$ in $\RR^d$,
we have the inequalities
$\delta_L(C) \le \delta_T(C) \le \delta(C) \le 1 \le
\theta(C) \le \theta_T(C) \le \theta_L(C)$.

For two convex bodies $A$ and $B$ in $\RR^d$,
denote by $\kappa(A,B)$ the smallest number $\kappa$
such that $A$ can be covered by $\kappa$ translates of $B$.

\paragraph{Main results.}

Kim \etal~\cite{KNPS06} recently proved that,
for any family $\F$ of translates of a convex body in $\RR^d$,
$\tau(\F) \le 2^{d-1}d^d \cdot \nu(\F)$,
in particular $\tau(\F) \le 108 \cdot \nu(\F)$ when $d = 3$,
and moreover $\tau(\F) \le 8 \cdot \nu(\F) - 5$ when $d = 2$.
We improve these bounds for all dimensions $d$ in the following theorem:

\begin{theorem}\label{thm:conv}
For any family $\F$ of translates of a convex body $C$ in $\RR^d$,
\begin{equation}\label{eq:newconvd}
\tau(\F) \le \gamma(C) \cdot \nu(\F),
	\quad\textrm{ where } \gamma(C) \le d(d+1)^{d-1}.
\end{equation}
In particular,
$\tau(\F) \le 48 \cdot \nu(\F)$ when $d = 3$,
and $\tau(\F) \le 6 \cdot \nu(\F)$ when $d = 2$.
\end{theorem}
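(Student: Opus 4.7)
The plan is to combine a greedy sweep-based packing with a carefully chosen grid of piercing points for each selected translate. First, by the affine invariance of $\nu(\F)$, $\tau(\F)$, and each ratio $\lambda_i(P,Q)$, I may fix parallel parallelepipeds $P \subseteq C \subseteq Q$ attaining the minimum in the definition of $\gamma(C)$ and apply a non-singular affine transformation so that $P$ and $Q$ become axis-parallel boxes of edge lengths $p_i$ and $q_i = \lambda_i p_i$, with $0 \in P$.

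Next, I would sort $\F$ in decreasing order of the $x_d$-coordinate of each translate's reference point and iteratively build a maximal packing $\S \subseteq \F$ greedily: scan the sorted list and add each translate to $\S$ if it is disjoint from every previously added one. Since $\S$ is a packing, $|\S| \le \nu(\F)$, and for every non-selected $C + f$ there is an earlier-selected $C + s \in \S$ meeting $C + f$, necessarily with $s_d \ge f_d$; hence $\Delta := f - s$ lies in $(C - C) \cap \{x_d \le 0\}$. For each $s \in \S$, attach the piercing set $\P_s := s + V$, where $V \subset \RR^d$ is a fixed grid of $\gamma(C)$ offsets containing the origin: $\lceil \lambda_d \rceil$ nodes spaced $p_d$ in the $d$-direction and $\lceil \lambda_i + 1 \rceil$ nodes spaced $p_i$ in each transverse direction $i < d$. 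Setting $\P := \bigcup_{s \in \S} \P_s$ yields $|\P| \le \gamma(C) \cdot |\S| \le \gamma(C) \cdot \nu(\F)$, and each selected $C + s$ is pierced by $s \in \P_s$ since $0 \in P \subseteq C$.

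The main obstacle is to verify that every non-selected $C + f$ is also pierced, which reduces to the covering claim $\bigcup_{v \in V}(v - C) \supseteq (C - C) \cap \{x_d \le 0\}$. In the sweep direction $x_d$ the rejected region has extent at most $q_d$ and is easily covered by $\lceil \lambda_d \rceil$ translates of $-C$ spaced by $p_d$. The subtle part is in the transverse directions $i < d$, where the tighter bound $\lceil \lambda_i + 1 \rceil$ (rather than the naive $\lceil 2\lambda_i \rceil$ that one would obtain by covering with translates of the inscribed $-P$ alone) requires leveraging both the convex central symmetry of $C - C$ and the fact that each $-C$ translate spans the full width $w_i$ of $C$ in direction $i$, not merely the width $p_i$ of $P$. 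I expect this covering verification to be the technical crux of the proof, likely carried out by a Minkowski-sum argument showing that grid nodes spanning a $(q_i + p_i)$-range in each transverse direction jointly cover the half of $C - C$.

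Finally, the uniform bound $\gamma(C) \le d(d+1)^{d-1}$ follows by exhibiting parallel parallelepipeds $P \subseteq C \subseteq Q$ with $\lambda_i(P,Q) \le d$ for every $i$, a classical sandwich obtained, for instance, from a maximum-volume inscribed parallelepiped of $C$ together with its parallel $d$-fold dilate. Substituting yields $\lceil \lambda_d \rceil \le d$ and $\lceil \lambda_i + 1 \rceil \le d+1$, hence $\gamma(C) \le d(d+1)^{d-1}$ and $\tau(\F) \le d(d+1)^{d-1}\,\nu(\F)$; specializing to $d = 2$ and $d = 3$ gives the stated bounds $6\,\nu(\F)$ and $48\,\nu(\F)$.
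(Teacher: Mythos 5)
Your skeleton (greedy maximal packing in decreasing $x_d$-order, then a fixed set of $\gamma(C)$ points attached to each selected translate) correctly reduces the theorem to the covering claim $(C-C)\cap\{x_d\le 0\}\subseteq\bigcup_{v\in V}(v-C)$ for a grid $V$ with $\lceil\lambda_d\rceil$ nodes at spacing $p_d$ and $\lceil\lambda_i+1\rceil$ nodes at spacing $p_i$ for $i<d$. But that claim is exactly where your proof stops: you flag it as ``the technical crux'' and only conjecture a Minkowski-sum argument. This is a genuine gap, and not a routine verification, because the claim is \emph{false} as a statement about an arbitrary sandwich $P\subseteq C\subseteq Q$. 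Take $d=2$, $C=\mathrm{conv}\bigl([0,1]^2\cup\{(-M,\tfrac12),(M+1,\tfrac12)\}\bigr)$ (a centrally symmetric hexagon), $P=[0,1]^2$, $Q=[-M,M+1]\times[0,1]$, so $\lambda_1=2M+1$, $\lambda_2=1$, and your grid is a single row of $\lceil\lambda_1+1\rceil=2M+2$ translates of $-C$. Each translate of $-C$ has height $1$, and $(C-C)\cap\{x_2\le 0\}$ has vertical extent $[-1,0]$, so the whole row must sit at height $0$; but then at heights just below $0$ each translate contributes a horizontal cross-section of length about $1$ (the top of $-C$), so the union has width about $2M+2$, while the cross-section of $C-C$ there has width about $4M+2$. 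Hence no argument that uses only $P\subseteq C\subseteq Q$ and the grid dimensions (which is all your sketched Minkowski-sum argument would use) can establish the covering; at best the claim could hold for the pair minimizing $\gamma(C)$, and nothing in your proposal exploits that minimality. In effect you are trying to prove $\min_L\kappa((C-C)\cap L,C)\le\gamma(C)$, a statement the paper never makes and which appears substantially harder than the theorem itself.

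The paper's proof avoids any covering of $(C-C)\cap L$ by making the piercing points adapt to the family rather than forming a fixed pattern around each selected member. It partitions $\F$ according to a lattice of lines parallel to $x_d$ (positioned so that each member's inscribed copy of $P$ meets exactly one line), groups the lines into $\prod_{i<d}\lceil\lambda_i+1\rceil$ residue classes so that members on distinct lines of one class are pairwise disjoint, and within each line family runs a one-dimensional sweep along $x_d$, placing $\lceil\lambda_d\rceil$ points \emph{on the common line} for each greedy representative; the transverse factors $\lceil\lambda_i+1\rceil$ come from counting residue classes, not from covering a transverse slice of $C-C$ by translates of $C$. To repair your argument you would either need to prove the covering claim for the minimizing (or the Chakerian--Stein) pair --- which, as the example shows, cannot be done generically --- or switch to an adaptive decomposition of this kind. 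Your final paragraph (the bound $\gamma(C)\le d(d+1)^{d-1}$ via a sandwich with all ratios $d$, and the specializations to $d=2,3$) is fine and matches the paper's use of the Chakerian--Stein lemma.
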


For any parallelepiped $C$ in $\RR^d$,
we can choose two parallelepipeds $P$ and $Q$ such that $P = Q = C$
hence $P \subseteq C \subseteq Q$.
Then $\lambda_i(P,Q) = 1$ for $1 \le i \le d$,
and $\gamma(C) = 2^{d-1}$.
This implies the following corollary:

\begin{corollary}\label{cor:parallelepipedt}
For any family $\F$ of translates of a parallelepiped in $\RR^d$,
$\tau(\F) \le 2^{d-1} \cdot \nu(\F)$.
\end{corollary}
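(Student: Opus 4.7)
The plan is to derive the corollary directly from Theorem~\ref{thm:conv} by choosing the pair of enclosing parallelepipeds in the definition of $\gamma(C)$ to coincide with $C$ itself, exactly as the paragraph preceding the corollary hints.

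First, I would check that $(P,Q) = (C,C)$ is a legitimate choice in the minimum defining $\gamma(C)$: $P$ and $Q$ are parallelepipeds, they are parallel to each other (being identical), and the inclusion $P \subseteq C \subseteq Q$ holds trivially. Since corresponding edges of $P$ and $Q$ have the same length, $\lambda_i(P,Q) = 1$ for every $1 \le i \le d$.

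Next, I would substitute these values into the expression for $\gamma(C)$. The last factor contributes $\lceil 1 \rceil = 1$, while each of the remaining $d-1$ factors contributes $\lceil 1+1 \rceil = 2$, giving $\gamma(C) \le 1 \cdot 2^{d-1} = 2^{d-1}$. Plugging this into Theorem~\ref{thm:conv} yields $\tau(\F) \le \gamma(C) \cdot \nu(\F) \le 2^{d-1} \cdot \nu(\F)$, which is the desired inequality.

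There is no genuine obstacle here: with Theorem~\ref{thm:conv} already available, the corollary amounts to the observation that a parallelepiped is its own tightest inscribed and circumscribed parallelepiped within the class considered by $\gamma$, so each multiplicative penalty $\lceil \lambda_i(P,Q)+1 \rceil$ reduces to the minimum possible value of $2$ in the first $d-1$ coordinates while the final factor is $1$.
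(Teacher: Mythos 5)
Your proposal is correct and is exactly the paper's argument: take $P = Q = C$ in the definition of $\gamma(C)$, so that $\lambda_i(P,Q) = 1$ for all $i$, giving $\gamma(C) \le \lceil 1 \rceil \cdot 2^{d-1} = 2^{d-1}$, and then apply Theorem~\ref{thm:conv}. Nothing further is needed.
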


In contrast, for a family $\F$ of (not necessarily congruent or similar)
axis-parallel parallelepipeds in $\RR^d$,
the current best upper bound~\cite{FK93} (see also~\cite{Ka91,KT96,Ni01}) is
$$
\tau(\F) \le \nu(\F)\log^{d-2}\nu(\F)(\log\nu(\F) - 1/2) + d.
$$

Kim \etal~\cite{KNPS06} also proved that,
for any family $\F$ of translates
of a centrally symmetric convex body in the plane,
$\tau(\F) \le 6 \cdot \nu(\F) - 3$.
The following theorem gives
a general bound for any centrally symmetric convex body in $\RR^d$
and an improved bound (if $\nu(\F) \ge 5$)
for any centrally symmetric convex body in the plane:

\begin{theorem}\label{thm:symm}
For any family $\F$ of translates of a centrally symmetric convex body $S$
in $\RR^d$,
\begin{equation}\label{eq:newsymmd}
\tau(\F) \le 2^d \cdot \frac{\theta_L(S)}{\delta_L(S)} \cdot \nu(\F).
\end{equation}
Moreover,
$\tau(\F) \le 24 \cdot \nu(\F)$ when $d = 3$,
and
$\tau(\F) \le \frac{16}3 \cdot \nu(\F)$ when $d = 2$.
\end{theorem}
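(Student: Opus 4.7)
}

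Since $S$ is centrally symmetric we may normalize so that $S = -S$. Then two translates $S+a$ and $S+b$ intersect iff $a - b \in 2S$, and a point $p$ pierces $S+b$ iff $b \in p + S$. Let $\{S+a_i\}_{i=1}^\nu \subseteq \F$ be any maximum packing, where $\nu = \nu(\F)$, and set $A = \{a_1, \ldots, a_\nu\}$. By maximality of the packing, every body $S + b \in \F$ meets some $S + a_i$, hence every such center $b$ lies in $A + 2S$; consequently any set $P$ of points with $\bigcup_{p \in P}(p + S) \supseteq A + 2S$ already pierces all of $\F$, and the problem reduces to covering the region $A + 2S$ by translates of $S$.

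To build such a cover efficiently I would place piercing points on a lattice. Let $\Lambda$ be a lattice achieving the optimal lattice covering density for $S$, so $\det(\Lambda) = |S|/\theta_L(S)$ and $\{S + \lambda : \lambda \in \Lambda\}$ covers $\RR^d$. Take $P$ to be the set of those $\lambda \in \Lambda$ for which $S + \lambda$ meets at least one body of $\F$. By construction $P$ pierces every body of $\F$, and every $\lambda \in P$ lies in $A + 3S = (A + 2S) + S$. The cardinality of $P$ is then controlled by a standard lattice point count of the form $|P| \lesssim |A+3S| \cdot \theta_L(S)/|S|$.

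The principal obstacle is to upgrade this count to the claimed bound $|P| \le 2^d \cdot \theta_L(S)/\delta_L(S) \cdot \nu(\F)$. The intuition is that each $a_i$ should ``charge'' at most $2^d \theta_L(S)/\delta_L(S)$ lattice points: the $\Lambda$ points within a single cluster region $a_i + 2S$ number roughly $|2S|/\det(\Lambda) = 2^d \theta_L(S)$, while the packing condition $\{S + a_i\}$ forces the $a_i$ to sit at lattice-packing density, so that the cluster regions $a_i + 2S$ cover a volume that collectively amortizes to $\nu \cdot |S|/\delta_L(S)$ rather than $\nu \cdot |2S|$. Making this amortization rigorous is the heart of the argument and likely requires a careful volumetric inequality on the Minkowski sums $A + S$, $A + 2S$, and $A + 3S$, exploiting the central symmetry of $S$ to absorb the boundary ``halo'' into the factor $2^d$.

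For the low-dimensional refinements I would invoke sharp ratio inequalities on $\theta_L(S)/\delta_L(S)$ for centrally symmetric convex bodies: namely $\theta_L(S)/\delta_L(S) \le 4/3$ in $\RR^2$ (sharp for the Euclidean disk) and $\theta_L(S)/\delta_L(S) \le 3$ in $\RR^3$. Substituting these into the general estimate yields $\tau(\F) \le 2^2 \cdot \tfrac{4}{3} \cdot \nu(\F) = \tfrac{16}{3} \nu(\F)$ for $d = 2$ and $\tau(\F) \le 2^3 \cdot 3 \cdot \nu(\F) = 24\nu(\F)$ for $d = 3$.
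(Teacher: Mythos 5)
There is a genuine gap at the heart of your argument: the inequality \eqref{eq:newsymmd} is never actually established, and you say so yourself (``the principal obstacle is to upgrade this count''). Your reduction to a maximum packing $A=\{a_1,\dots,a_\nu\}$ and a fixed optimal covering lattice $\Lambda$ only gives, rigorously, a count of lattice points in $A+3S$, and since $|A+3S|\le 3^d|S|\,\nu(\F)$ this route yields a bound of order $3^d\theta_L(S)\cdot\nu(\F)$, not $2^d\,\theta_L(S)/\delta_L(S)\cdot\nu(\F)$; moreover, for a \emph{fixed} position of $\Lambda$ the number of lattice points in a region is only $|{\cdot}|/\det\Lambda$ up to boundary terms, so even your preliminary count needs an averaging argument over translates of the lattice. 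The paper's proof avoids the maximum-packing reduction entirely and instead links \emph{both} quantities to the volume of the union: Lemma~\ref{lem:tau} shows $\tau(\F)\le\theta_L(S)\cdot|\F|/|S|$ by translating an optimal covering lattice (pigeonhole over lattice translates, plus central symmetry to convert ``$p\in S_q$'' into ``$q\in S_p$''), and the dual Lemma~\ref{lem:nu} shows $\nu(\F)\ge\frac{\delta_L(S)}{2^d}\cdot|\F|/|S|$ by translating an optimal packing lattice for $2S$ so that many lattice points are covered by $\bigcup\F$ and picking one member of $\F$ per covered lattice point (disjointness follows because the $2S$-translates at distinct lattice points are disjoint). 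Lemma~\ref{lem:nu} is exactly the rigorous form of the ``amortization'' you describe but do not carry out; without it, or an equivalent volumetric lower bound on $\nu(\F)$, your plan does not reach the stated constant.

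There is a second, smaller problem in the planar case: you invoke ``$\theta_L(S)/\delta_L(S)\le 4/3$ in $\RR^2$'' as a known sharp inequality. The $4/3$ inequality of Kuperberg cited in the paper is about packing and covering with \emph{congruent copies}, not lattice densities, and the paper explicitly notes it cannot be used here; this is why it proves the planar bound differently, via the sandwich lemma together with the folklore fact that $H\subseteq S\subseteq H'$ for centrally symmetric hexagons with $|H|/|H'|\ge 3/4$ and $\theta_L(H)=\delta_L(H)=1$, giving $2^2\cdot\frac43\cdot 1=\frac{16}{3}$. Your lattice-density ratio bound is in fact derivable from those same hexagon facts (so your numerical conclusion is right), but as written it is an unproved assertion standing in for the one planar ingredient that actually requires an argument. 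The $d=3$ step via Smith's inequality $\theta_L(S)\le 3\,\delta_L(S)$ matches the paper.
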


For special types of convex bodies in the plane,
the following theorem gives sharper bounds
than the bounds implied by Theorem~\ref{thm:conv} and Theorem~\ref{thm:symm}.
Also, as we will show later,
inequality \eqref{eq:greedyt} below may give a better asymptotic bound
than \eqref{eq:newconvd} and \eqref{eq:newsymmd}
for high dimensions.

\begin{theorem}\label{thm:greedyt}
Let $\F$ be a family of translates of a convex body $C$ in $\RR^d$.
Then
\begin{equation}\label{eq:greedyt}
\tau(\F) \le \min_L \kappa((C-C) \cap L, C) \cdot \nu(\F),
\end{equation}
where $L$ ranges over all closed half spaces bounded by hyperplanes
through the center of $C-C$.
Moreover, $\tau(\F) \le 4 \cdot \nu(\F) - 1$
if $C$ is a centrally symmetric convex body in the plane.
Also,
\begin{enumerate}
\item[\textup{(i)}]
If $C$ is a square, then
$\tau(\F) \le 2 \cdot \nu(\F) - 1$,
\item[\textup{(ii)}]
If $C$ is a triangle, then
$\tau(\F) \le 5 \cdot \nu(\F) - 2$,
\item[\textup{(iii)}]
If $C$ is a disk, then
$\tau(\F) \le 4 \cdot \nu(\F) - 1$.
\end{enumerate}
\end{theorem}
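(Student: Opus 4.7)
The plan is to prove the main inequality by a greedy sweep along a chosen direction, then obtain each refined bound by induction on $\nu(\F)$, with the base case $\nu(\F)=1$ supplying the ``$-1$'' or ``$-2$'' saving via known values of $\alpha_1(C)$.

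For the main inequality, I fix a half-space $L = \{x : \ell(x) \le 0\}$ bounded by a hyperplane through the origin (the center of the centrally symmetric set $C-C$). The greedy step picks $C + a^* \in \F$ maximizing $\ell(a^*)$. Two translates $C + a$ and $C + b$ meet iff $b-a \in C-C$, and maximality forces $\ell(b - a^*) \le 0$; hence every $C + b \in \F$ intersecting $C + a^*$ satisfies $b - a^* \in (C-C) \cap L$. A point $p$ pierces $C + b$ iff $b \in p + (-C)$, so $\kappa$ piercing points for all such translates exist iff $(C-C) \cap L$ is coverable by $\kappa$ translates of $-C$. By the central symmetry of $C-C$, $\kappa((C-C) \cap L, -C) = \kappa((C-C) \cap (-L), C)$, and minimizing over $L$ gives $\min_L \kappa((C-C) \cap L, C)$ as in the statement. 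Removing $C + a^*$ together with all translates it meets and iterating, the sequence of chosen translates is pairwise disjoint (hence an independent set), so the iteration count is at most $\nu(\F)$.

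For the refined bounds, I induct on $\nu(\F)$ with the claim $\tau(\F) \le \kappa\,\nu(\F) - c$, where $\kappa$ is the greedy constant used and $c = \kappa - \alpha_1(C)$. The base case $\nu(\F) = 1$ means every pair of translates of $C$ in $\F$ meets (else there would be an independent pair), so $\tau(\F) \le \alpha_1(C) = \kappa - c$. In the inductive step one greedy iteration spends $\kappa$ piercing points and deletes a subfamily $\F^*$ containing $C + a^*$; the remainder $\F'' = \F \setminus \F^*$ satisfies $\nu(\F'') \le \nu(\F) - 1$, since $\{C + a^*\} \cup I$ is independent in $\F$ whenever $I$ is independent in $\F''$. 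The inductive hypothesis gives $\tau(\F'') \le \kappa(\nu(\F) - 1) - c$, and adding $\kappa$ yields $\tau(\F) \le \kappa\,\nu(\F) - c$. For a centrally symmetric planar body $C$ we have $\alpha_1(C) \le 3$ (Karasev, for any planar convex body), so with $\kappa = 4$ (to be verified) we get $\tau(\F) \le 4\nu(\F) - 1$. The square, triangle, and disk cases follow similarly from $(\kappa,\alpha_1) = (2,1)$, $(5,3)$, and $(4,3)$.

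The main obstacle is verifying the specific covering estimates $\kappa((C-C) \cap L, C)$ for an appropriate $L$. For the square, $C - C$ is a $2 \times 2$ square whose half is a $2 \times 1$ rectangle, covered by two unit squares. For the disk, $C - C$ is a disk of radius $2$, and its half admits an explicit cover by four unit disks with symmetrically placed centers. For the triangle, $C - C$ is a centrally symmetric hexagon whose half admits a five-translate cover compensating for the fact that $C$ is not centrally symmetric (the natural decomposition into three small triangles mixes copies of $C$ and $-C$, and the extra two copies absorb the mismatch). For a general centrally symmetric planar body $S$ one exhibits a four-translate cover of $2S \cap L$ for a suitable $L$, using the structure of an inscribed parallelogram or affinely regular hexagon of $S$. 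These planar covering verifications are the only nontrivial geometric content; given them, the greedy algorithm and the induction are routine.
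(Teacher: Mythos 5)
Your skeleton is essentially the paper's own argument: the same greedy decomposition (choose the translate extremal in a fixed direction, remove it together with everything it meets), the same reference-point correspondence (the paper's Lemma~\ref{lem:ccc}) reducing each subfamily to covering $(C-C)\cap L$ by translates of $-C$, and the same source of the additive saving, namely that the last subfamily has packing number $1$ and can be pierced by $\alpha_1(C)$ points; your induction with $c=\kappa-\alpha_1(C)$ is just a reorganization of the paper's two-case analysis ($m\le\nu(\F)-1$ versus $m=\nu(\F)$). The derivation of \eqref{eq:greedyt} itself, including the symmetry step $\kappa((C-C)\cap L,-C)=\kappa((C-C)\cap(-L),C)$ and the observation that the greedily chosen translates are pairwise disjoint, is correct, as are the easy covers for the square (a $2\times1$ rectangle by two unit squares) and the disk (a half-disk of radius $2$ by four unit disks).

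The genuine gap is in the two covering estimates you defer. The claim $\kappa((C-C)\cap L,C)\le 4$ for an \emph{arbitrary} centrally symmetric planar body, which you flag as ``to be verified,'' is precisely the nontrivial geometric content behind the bound $\tau(\F)\le 4\nu(\F)-1$; gesturing at ``an inscribed parallelogram or affinely regular hexagon'' does not establish it. The paper proves it via Lemma~\ref{lem:c7}: $2C$ is covered by seven translates of $C$, one concentric with $2C$ and six centered at the vertices of an affinely regular hexagon concentric with $2C$ (the construction inscribes an affinely regular hexagon $2H$ in $2C$ with two opposite vertices on an arbitrary line through the center, and uses a homothety of ratio $2$ to check that a translate of $C$ circumscribing a half-size hexagon covers each sector of $2C$ between consecutive vertices of $2H$); taking $L$ bounded by the line through two opposite vertices of $2H$ then leaves only four of the seven pieces on one side, giving $\kappa\le 4$. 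Likewise, your five-translate cover of half of $T-T$ for a triangle $T$ is asserted but never constructed; the paper exhibits an explicit trapezoid containing the reference points of the translates meeting the extremal triangle and covers it by five translates of $-T$. Without these two constructions the constants $4$ and $5$, and hence the bounds $4\nu(\F)-1$ and $5\nu(\F)-2$, are not established, even though the greedy-plus-induction framework around them is sound.
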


Having presented our bounds for families of translates, we now turn to
families of homothets. Kim \etal~\cite{KNPS06} proved that,
for any family $\F$ of homothets of a convex body $C$ in the plane,
$\tau(\F) \le 16 \cdot \nu(\F)$ and, if $C$ is centrally symmetric,
$\tau(\F) \le 9 \cdot \nu(\F)$.
The following theorem
gives a general bound for any convex body in $\RR^d$,
an improved bound for any centrally symmetric convex body in the plane,
and additional bounds for special types of convex bodies in the plane:

\begin{theorem}\label{thm:greedyh}
Let $\F$ be a family of homothets of a convex body $C$ in $\RR^d$. Then
\begin{equation}\label{eq:greedyh}
\tau(\F) \le \kappa(C-C, C) \cdot \nu(\F).
\end{equation}
In particular, $\tau(\F) \le 7 \cdot \nu(\F)$
if $C$ is a centrally symmetric convex body in the plane.
Moreover,
\begin{enumerate}
\item[\textup{(i)}]
If $C$ is a square, then
$\tau(\F) \le 4 \cdot \nu(\F) - 3$,
\item[\textup{(ii)}]
If $C$ is a triangle, then
$\tau(\F) \le 12 \cdot \nu(\F) - 9$,
\item[\textup{(iii)}]
If $C$ is a disk, then
$\tau(\F) \le 7 \cdot \nu(\F) - 3$.
\end{enumerate}
\end{theorem}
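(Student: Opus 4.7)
The plan is to prove the sharper inequality $\tau(\F) \le \beta_1(C) + \kappa(C-C, C)\cdot(\nu(\F)-1)$ by induction on $\nu(\F)$; this yields the main bound $\tau(\F) \le \kappa(C-C, C)\cdot\nu(\F)$ (since $\beta_1(C)\le \kappa(C-C, C)$, a consequence of the $\nu=1$ case), and, by substituting the known values $\beta_1(C)\in\{1,3,4\}$ and $\kappa(C-C, C)\in\{4,12,7\}$ for a square, triangle, and disk respectively, parts (i)--(iii) as well. The centrally symmetric planar claim $\tau(\F)\le 7\nu(\F)$ will follow from the classical fact that $\kappa(2C,C)\le 7$ for any centrally symmetric planar body $C$.

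The core ingredient is the following lemma: there exist $m=\kappa(C-C,C)$ points $q_1,\ldots,q_m$ such that every homothet $C'=p'+\lambda' C$ with $\lambda'\ge 1$ and $C'\cap C\ne\emptyset$ contains some $q_k$. To prove this, I would start from a cover $C-C\subseteq \bigcup_{k=1}^m (q_k-C)$ (available because $\kappa(C-C,C)=\kappa(C-C,-C)$, using the central symmetry of $C-C$) and upgrade it to $C-\lambda C\subseteq \bigcup_k(q_k-\lambda C)$ for every $\lambda\ge 1$. The verification is a short convex-combination computation: writing a generic $p=c_1-\lambda c_2\in C-\lambda C$ with $c_1,c_2\in C$ and picking $k,c_3\in C$ with $q_k=c_1-c_2+c_3$, one gets $q_k-p=c_3+(\lambda-1)c_2=\lambda\bigl[\tfrac{1}{\lambda} c_3+\tfrac{\lambda-1}{\lambda}c_2\bigr]\in\lambda C$. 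This scaling step is where I expect the main obstacle, since a naive translative cover does not automatically survive enlargement; the convex-combination identity is precisely what makes the argument go through. Applying the extended cover to $p'=x-\lambda' c\in C-\lambda' C$ (where $x\in C\cap C'$) then yields some $q_k\in C'$.

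The induction is then straightforward. Base case $\nu(\F)=1$: the family is pairwise intersecting, so $\tau(\F)\le\beta_1(C)$ by definition. Inductive step $\nu(\F)\ge 2$: let $C_*$ be a smallest homothet in $\F$, set $\F_*=\{C\in\F:C\cap C_*\ne\emptyset\}$, and consider its complement $\F\setminus\F_*$. After the affine map sending $C_*$ to $C$, every element of $\F_*$ has scale factor $\ge 1$, so the lemma pierces $\F_*$ with $\kappa(C-C,C)$ points; meanwhile every element of $\F\setminus\F_*$ is disjoint from $C_*$, forcing $\nu(\F\setminus\F_*)\le\nu(\F)-1$, and the inductive hypothesis handles $\F\setminus\F_*$. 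Parts (i)--(iii) then reduce to confirming the values of $\beta_1(C)$ and $\kappa(C-C,C)$ for the three bodies; among these, showing $\kappa(T-T,T)=12$ for a triangle $T$ (covering the affinely regular hexagon $T-T$ by translates of $T$) is the most delicate geometric subcomputation.
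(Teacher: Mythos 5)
Your proposal is correct and follows essentially the same route as the paper: peel off the ``star'' of the smallest homothet (your induction is the paper's greedy decomposition), pierce it with $\kappa(C-C,C)$ points coming from a cover of $C-C$ by translates of $-C$, and use $\beta_1(C)$ for the final pairwise-intersecting block to get the improved constants in (i)--(iii). The only difference is cosmetic: where the paper argues that each larger homothet meeting $C_i$ contains a translate of $C_i$ meeting $C_i$ and then reuses the translate-case argument (Lemma~\ref{lem:ccc}), you verify directly via the convex-combination identity that the same cover scales to $C-\lambda C\subseteq\bigcup_k(q_k-\lambda C)$ for $\lambda\ge 1$, which is a perfectly valid substitute.
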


For any parallelepiped $C$ in $\RR^d$,
$C-C$ is a translate of $2C$ and can be covered by $2^d$ translates of $C$,
thus $\kappa(C-C,C) \le 2^d$.
This implies the following corollary:

\begin{corollary}\label{cor:parallelepipedh}
For any family $\F$ of homothets of a parallelepiped in $\RR^d$,
$\tau(\F) \le 2^d \cdot \nu(\F)$.
\end{corollary}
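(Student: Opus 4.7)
The plan is to derive this corollary directly from Theorem~\ref{thm:greedyh} by bounding the covering number $\kappa(C-C,C)$ for a parallelepiped $C$. Theorem~\ref{thm:greedyh} gives $\tau(\F) \le \kappa(C-C,C)\cdot \nu(\F)$ for any family of homothets of a convex body $C$, so the entire task reduces to showing $\kappa(C-C,C) \le 2^d$ when $C$ is a parallelepiped.

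First I would note that every parallelepiped $C$ is centrally symmetric: if $p$ denotes its centroid, then $C = p + C_0$ with $C_0 = -C_0$. Consequently
\[
C - C = (p+C_0) - (p+C_0) = C_0 - C_0 = 2C_0,
\]
so $C-C$ is simply a translate of $2C$.

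Next I would exhibit a covering of $2C$ by $2^d$ translates of $C$. Fix a vertex of $C$ and the edge vectors $v_1,\dots,v_d$ emanating from it; the doubled body $2C$ has edge vectors $2v_1,\dots,2v_d$ from the corresponding vertex. Bisecting each edge produces a grid of $2^d$ congruent sub-parallelepipeds, each a translate of $C$, whose union is exactly $2C$. Hence $\kappa(2C, C) \le 2^d$, and by the translation-invariance of $\kappa$, $\kappa(C-C,C) \le 2^d$ as well. Plugging this into inequality~\eqref{eq:greedyh} yields the desired bound $\tau(\F) \le 2^d\cdot \nu(\F)$.

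There is no real obstacle here: the two essential ingredients (central symmetry of parallelepipeds and their self-similar tiling under doubling) are elementary, and the corollary is little more than a direct specialization of Theorem~\ref{thm:greedyh}.
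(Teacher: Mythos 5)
Your proof is correct and follows essentially the same route as the paper: the authors also observe that for a parallelepiped $C$ the difference body $C-C$ is a translate of $2C$, which is covered by $2^d$ translates of $C$, so $\kappa(C-C,C)\le 2^d$ and the bound follows from Theorem~\ref{thm:greedyh}. Your explicit justification of the central symmetry and the bisection grid simply fills in details the paper leaves implicit.
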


Both Theorem~\ref{thm:greedyt} and Theorem~\ref{thm:greedyh}
are obtained by a simple greedy method,
used also previously by Kim \etal~\cite{KNPS06}.
Although we have improved their bounds using new techniques
in Theorem~\ref{thm:conv} and Theorem~\ref{thm:symm},
we show that a refined analysis of the simple greedy method
yields even better asymptotic bounds for high dimensions
in Theorem~\ref{thm:greedyt} and Theorem~\ref{thm:greedyh}.
We will use the following lemma by Chakerian and Stein~\cite{CS67}
in our analysis:

\begin{lemma}[Chakerian and Stein~\cite{CS67}]\label{lem:cs}
For every convex body $C$ in $\RR^d$ there exist two parallelepipeds
$P$ and $Q$ such that $P \subseteq C \subseteq Q$,
where $P$ and $Q$ are homothetic with ratio at most $d$.
\end{lemma}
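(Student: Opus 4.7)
Plan: I would follow the approach of Chakerian and Stein, using a maximum-volume inscribed parallelepiped as the key ingredient. Let $P$ be a parallelepiped of maximum volume inscribed in $C$; such a $P$ exists by compactness, since the family of parallelepipeds inscribed in $C$ is compact (parametrize by a base vertex and $d$ edge vectors) and the volume functional is continuous. Because both the hypothesis and the conclusion are invariant under non-singular affine transformations, I would apply one so that $P = [0,1]^d$ is the unit cube.

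It suffices to produce a cube $Q$ of side $d$, homothetic to $P$ with ratio $d$, that contains $C$. A suitable translate of such a $Q$ will contain $C$ provided that the width of $C$ in every coordinate direction of the $P$-frame is at most $d$, so the core step is to establish this width bound. Suppose for contradiction that the width of $C$ in some direction, say $e_1$, exceeds $d$. Then there is a point $x \in C$ whose first coordinate lies sufficiently far from the slab $\{0 \le y_1 \le 1\}$ containing $P$; by the pigeonhole/symmetry argument on the two sides of $P$, one may assume $x_1$ is large enough that $x$ alone witnesses the excess width. Since $P \subseteq C$ and $x \in C$, convexity gives $\textup{conv}(P, x) \subseteq C$, and the plan is to construct inside $\textup{conv}(P, x)$ a parallelepiped $P^{\ast}$ with $|P^{\ast}| > 1 = |P|$, contradicting the maximality of $P$.

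The main obstacle is the construction of $P^{\ast}$: a naive axis-aligned box extending $P$ in the $e_1$-direction is not contained in $\textup{conv}(P, x)$, and a parallelepiped inscribed in the cone from one face of $P$ to $x$ turns out to be too small to beat $|P|$ at the desired threshold. The correct choice is to \emph{shear} $P$ by tilting its $e_1$-edge toward $x$ while simultaneously shrinking the faces parallel to $e_1$ by a proportional amount, so that the sheared parallelepiped still lies in $\textup{conv}(P, x)$ by convexity. A single-parameter optimization of the shear amount then shows that the volume of this sheared parallelepiped surpasses $1$ precisely when the width of $C$ in direction $e_1$ exceeds $d$, yielding the required contradiction and hence $C \subseteq Q$.
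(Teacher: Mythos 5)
Your overall strategy (maximum-volume inscribed parallelepiped, affine normalization to $P=[0,1]^d$, reduction of ``$C$ fits in a translate of $dP$'' to a per-coordinate width bound) is the right frame, but the key step has a genuine gap: the reduction ``by pigeonhole/symmetry one may assume $x$ alone witnesses the excess width'' is not valid, and the argument built on it fails quantitatively. If the width of $C$ in direction $e_1$ exceeds $d$ while $P$ occupies the slab $\{0\le y_1\le 1\}$, the two overhangs beyond the facets $\{y_1=0\}$ and $\{y_1=1\}$ only sum to more than $d-1$, so pigeonhole gives a single point protruding by more than $(d-1)/2$ -- not by more than $d-1$. Now carry out your sheared construction for one witness point $x$ with overhang $t$: shrinking the facet by the factor $\lambda=1-h/t$ and tilting the long edge toward $x$ yields a parallelepiped of volume $\lambda^{d-1}(1+h)=(1-h/t)^{d-1}(1+h)$, whose supremum over $h$ exceeds $1$ precisely when $t>d-1$, not when $t>(d-1)/2$. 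So at the threshold your pigeonhole provides, no contradiction is obtained; in fact none can be, because a one-sided overhang of exactly $d-1$ is consistent with maximality of $P$ (take $C$ a simplex: for the triangle $(0,0),(2,0),(0,2)$ the maximal parallelogram is $[0,1]^2$ and the vertex $(2,0)$ protrudes by $d-1=1$). Hence your final sentence, that the one-point shear beats volume $1$ ``precisely when the width of $C$ in direction $e_1$ exceeds $d$,'' is incorrect: width of $C$ greater than $d$ and one-point overhang greater than $d-1$ are different conditions, and only the latter makes the construction work.

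The missing idea is that the construction must use \emph{both} sides of the slab simultaneously. Take $y,z\in C$ with $z_1-y_1>d$, so the overhangs $t_y=-\min(y_1,0)$ and $t_z=\max(z_1-1,0)$ satisfy $t_y+t_z>d-1$. For $\lambda$ slightly below $1$, place a $\lambda$-scaled copy of the bottom facet inside $\mathrm{conv}(\{y\}\cup\{y_1=0\}\cap P)$ at level $-(1-\lambda)t_y$ and a $\lambda$-scaled copy of the top facet inside $\mathrm{conv}(\{z\}\cup\{y_1=1\}\cap P)$ at level $1+(1-\lambda)t_z$ (each obtained as a convex combination of the facet with the exterior point, so all vertices lie in $C$), and let $P^{\ast}$ be the parallelepiped spanned by these two parallel $\lambda$-cubes. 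Its volume is $\lambda^{d-1}\bigl(1+(1-\lambda)(t_y+t_z)\bigr)$, whose derivative at $\lambda=1$ equals $(d-1)-(t_y+t_z)<0$, so it exceeds $1$ for suitable $\lambda<1$, contradicting maximality; this gives exactly the width bound $d$ and recovers the lemma. (For what it is worth, the paper itself does not prove this lemma -- it is quoted from Chakerian and Stein -- so the above is the repair your argument needs rather than a comparison with a proof in the text.)
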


For any convex body $C$ in $\RR^d$,
let $P$ and $Q$ be the two parallelepipeds in Lemma~\ref{lem:cs}.
Since
$C-C \subseteq Q-Q$ and $P \subseteq C$,
it follows that
$\kappa(C-C,C) \le \kappa(Q-Q,P) = \kappa(2Q,P) \le (2d)^d$;
see also~\cite[Lemma~4]{KNPS06}.
The classic survey by Danzer, Gr\"unbaum, and Klee~\cite[pp.~146--147]{DGK63}
lists several other upper bounds due to Rogers 
and Danzer:
(i)
$\kappa(C-C,C) \le \frac{2^d}{d+1} 3^{d+1} \theta_T(C)$
for any convex body $C$ in $\RR^d$,
(ii)
$\kappa(C-C,C) \le 5^d$ and $\kappa(C-C,C) \le 3^d \theta_T(C)$
for any centrally symmetric convex body $C$ in $\RR^d$.
Note that $\theta_T(C) < d\ln d + d\ln\ln d + 5d = O(d\log d)$
for any convex body $C$ in $\RR^d$, according to a result of Rogers~\cite{Ro57}.
The following lemma summarizes the upper bounds on $\kappa(C-C,C)$:

\old{
We now prove the bound $\kappa(C-C,C) \le 5^d$
for any centrally symmetric convex body $C$ in $\RR^d$
Without loss of generality, assume that the center of $C$ is at the origin.
Since $C$ is centrally symmetric, we have $C-C = 2C$.
Now obtain an arbitrary maximal packing in $2.5C$ by translates of $0.5C$,
then expand each translate of $0.5C$ in the packing by a factor of $2$.
We claim that the resulting translates of $C$ must cover $2C$.
Suppose for contradiction that a point $p \in 2C$ is not covered.
Consider the translate $C + q$ corresponding to
any translate $0.5C + q$ in the packing centered at a point $q$.
Since $p$ is not contained in
$C + q = (0.5C + 0.5C) + q = (0.5C - 0.5C) + q$,
it follows by Lemma~\ref{lem:ccc} (ii)
(using the center of $C$ as the reference point)
that $0.5C + p$ does not intersect $0.5C + q$.
Furthermore, $0.5C + p$ is contained in $0.5C + 2C = 2.5C$.
This implies that we can add $0.5C + p$ to the packing,
which contradicts its maximality.
Therefore the minimum number of translates of $C$ that cover $2C$
is at most $(\frac{2.5}{0.5})^d = 5^d$ by a volume argument,
and we have $\kappa(C-C,C) = \kappa(2C,C) \le 5^d$.
}

\begin{lemma}\label{lem:kappa}
For any convex body $C$ in $\RR^d$,
$\kappa(C-C,C) \le \min\{
		(2d)^d,\,
		\frac{2^d}{d+1} 3^{d+1} \theta_T(C)
	\} = O(6^d \log d)$.
Moreover,
if $C$ is centrally symmetric, then
$\kappa(C-C,C) \le \min\{
		5^d,\,
		3^d \theta_T(C)
	\} = O(3^d d\log d)$.
\end{lemma}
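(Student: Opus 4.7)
The plan is to establish the four stated inequalities separately and then combine them with Rogers' estimate $\theta_T(C) < d\ln d + d\ln\ln d + 5d$ to read off the asymptotic rates. The bound $\kappa(C-C,C) \le (2d)^d$ has already been sketched in the paragraph immediately preceding the lemma: applying Lemma~\ref{lem:cs} yields parallelepipeds $P \subseteq C \subseteq Q$ with $Q$ homothetic to $P$ at ratio at most $d$; since $C-C \subseteq Q-Q$ is a translate of $2Q$ and $P \subseteq C$, we get $\kappa(C-C,C) \le \kappa(2Q,P) \le (2d)^d$ by a routine axis-aligned covering in the affine coordinate system where $P$ is the unit cube. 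The two Rogers--Danzer bounds $\frac{2^d}{d+1}3^{d+1}\theta_T(C)$ (general case) and $3^d\theta_T(C)$ (centrally symmetric case) I would simply cite from the survey~\cite{DGK63} without reproving them.

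For $\kappa(C-C,C) \le 5^d$ in the centrally symmetric case, I would give a short self-contained packing--covering duality argument. Translate so that the center of $C$ is the origin, giving $C-C = 2C$ and $\frac12 C - \frac12 C = C$. Take a maximal packing of $\frac52 C$ by translates of $\frac12 C$, and let $M$ be the set of translation vectors used. The claim is that the expanded family $\{C + q : q \in M\}$ covers $2C$. Indeed, if some $p \in 2C$ were uncovered then $p \notin C + q$ for every $q \in M$; via the central-symmetry equivalence $(\frac12 C + p) \cap (\frac12 C + q) \ne \emptyset \Leftrightarrow p - q \in C$, this means $\frac12 C + p$ is disjoint from every $\frac12 C + q$ while still contained in $\frac52 C$, so $\frac12 C + p$ could be added to the packing, contradicting maximality. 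A volume comparison then gives $|M| \le 5^d$.

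Finally, substituting Rogers' bound $\theta_T(C) = O(d\log d)$ into the two Rogers--Danzer estimates yields $O(6^d\log d)$ in the general case and $O(3^d d\log d)$ in the centrally symmetric case, completing the lemma. The only nontrivial step is the disjointness equivalence used in the $5^d$ argument; it is a standard but easy-to-miss consequence of $\frac12 C - \frac12 C = C$ that genuinely requires central symmetry (it fails, for instance, for triangles, which is consistent with having only the weaker $(2d)^d$ bound for general convex bodies). Once this equivalence is in hand, the rest of the proof is purely arithmetical and consists of assembling the four bounds and invoking Rogers' covering estimate.
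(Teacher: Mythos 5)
Your proposal is correct and follows essentially the same route as the paper: the $(2d)^d$ bound via the Chakerian--Stein parallelepipeds (Lemma~\ref{lem:cs}), the bounds $\frac{2^d}{d+1}3^{d+1}\theta_T(C)$ and $3^d\theta_T(C)$ cited from the Rogers--Danzer results in~\cite{DGK63}, and Rogers' estimate $\theta_T(C)=O(d\log d)$ to obtain the stated asymptotics. The only deviation is that you supply a self-contained maximal-packing-plus-volume argument for the $5^d$ bound where the paper merely cites it; that argument is correct as written, with the disjointness equivalence properly resting on $\tfrac12 C-\tfrac12 C=C$ for centrally symmetric $C$, so it is a harmless (indeed welcome) addition rather than a gap.
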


From Lemma~\ref{lem:kappa} and Theorem~\ref{thm:greedyh},
it follows that $\beta(C)$ is bounded by a function of $d$, namely by
$O(6^d \log d)$, for any convex body $C$ in $\RR^d$.
Since $\min_L \kappa((C-C) \cap L, C) \le \kappa(C-C,C)$,
Lemma~\ref{lem:kappa} also provides upper bounds on
$\min_L \kappa((C-C) \cap L, C)$ in Theorem~\ref{thm:greedyt}.
As a result,
\eqref{eq:greedyt} implies
an upper bound $\tau(\F) \le O(6^d\log d) \cdot \nu(\F)$
for any family $\F$ of translates of a convex body in $\RR^d$,
which is better than the upper bound
$\tau(\F) \le d(d+1)^{d-1} \cdot \nu(\F)$ in \eqref{eq:newconvd}
when $d$ is sufficiently large.
Also, \eqref{eq:greedyt} implies
an upper bound $\tau(\F) \le 3^d\theta_T(S) \cdot \nu(\F)$
for any family $\F$ of translates of a centrally symmetric convex body $S$
in $\RR^d$.
Schmidt~\cite{Sc63} showed that,
for any centrally symmetric convex body $S$,
$\delta_L(S) = \Omega(d/2^d)$; see also~\cite[p.~12]{BMP05}.
Hence \eqref{eq:newsymmd} implies the bound
$\tau(\F) \le O(4^d/d) \theta_L(S) \cdot \nu(\F)$.
Note that $\theta_T(S) \le \theta_L(S)$.
So \eqref{eq:greedyt} may be also better than \eqref{eq:newsymmd}
for high dimensions.
Table~\ref{tab1} summarizes the current best upper bounds
on $\alpha(C)$ and $\beta(C)$ (obtained by us and by others)
for various types of convex bodies $C$ in $\RR^d$.

\begin{table}[htbp]
\centering
\begin{tabular}{lr|lr}
\hline
\multicolumn{2}{c|}{\textsl{Convex body $C$ in $\RR^d$}}
	& \multicolumn{2}{c}{\textsl{$\alpha(C)$ upper}}
\\
\hline
arbitrary & $d=2$
	& $6$ & T\ref{thm:conv}
\\
centr.~symm. & $d=2$
	& $4$ & T\ref{thm:greedyt}
\\
arbitrary & $d=3$
	& $48$ & T\ref{thm:conv}
\\
centr.~symm. & $d=3$
	& $24$ & T\ref{thm:symm}
\\
arbitrary & $d>3$
	& $\min\{
		d(d+1)^{d-1},\,
		\frac{2^d}{d+1} 3^{d+1} \theta_T(C) \}$
	& T\ref{thm:conv} T\ref{thm:greedyh}-L\ref{lem:kappa}
\\
centr.~symm. & $d>3$
	& $\min\{
		d(d+1)^{d-1},\,
		2^d \frac{\theta_L(C)}{\delta_L(C)},\,
		5^d,\,
		3^d \theta_T(C) \}$
	& T\ref{thm:conv} T\ref{thm:symm} T\ref{thm:greedyh}-L\ref{lem:kappa}
\\
parallelepiped & $d \ge 2$
	& $2^{d-1}$ & C\ref{cor:parallelepipedt}
\\
\hline
\multicolumn{2}{c|}{\textsl{Convex body $C$ in $\RR^d$}}
	& \multicolumn{2}{c}{\textsl{$\beta(C)$ upper}}
\\
\hline
arbitrary & $d=2$
	& $16$ & \cite{KNPS06}
\\
centr.~symm. & $d=2$
	& $7$ & T\ref{thm:greedyh}
\\
arbitrary & $d=3$
	& $216$ & $^\dag$T\ref{thm:greedyh}-L\ref{lem:kappa}
\\
centr.~symm. & $d=3$
	& $125$ & $^\dag$T\ref{thm:greedyh}-L\ref{lem:kappa}
\\
arbitrary & $d>3$
	& $\min\{
		(2d)^d,\,
		\frac{2^d}{d+1} 3^{d+1} \theta_T(C) \}$
	& T\ref{thm:greedyh}-L\ref{lem:kappa}
\\
centr.~symm. & $d>3$
	& $\min\{
		5^d,\,
		3^d \theta_T(C) \}$
	& T\ref{thm:greedyh}-L\ref{lem:kappa}
\\
parallelepiped & $d \ge 2$
	& $2^d$ & C\ref{cor:parallelepipedh}
\\
\hline
\end{tabular}
\caption{\small Upper bounds on $\alpha(C)$ and $\beta(C)$
for a convex body $C$ in $\RR^d$.
$^\dag$By Theorem~\ref{thm:greedyh} and Lemma~\ref{lem:kappa}:
for $d = 3$, $(2d)^d = 216$ and $5^d = 125$.}
\label{tab1}
\end{table}

A natural question is whether $\alpha(C)$ or $\beta(C)$ need to be
exponential in $d$. The following theorem gives a positive answer:

\begin{theorem}\label{thm:lower}
For any convex body $C$ in $\RR^d$,
$\beta(C) \ge \alpha(C) \ge \frac{\theta_T(C)}{\delta_T(C)}$.
In particular,
if $C$ is the unit ball $B^d$ in $\RR^d$,
then $\beta(C) \ge \alpha(C) \ge 2^{(0.599 \pm o(1))d}$ 
as $d\to\infty$.
\end{theorem}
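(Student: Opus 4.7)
The first inequality $\beta(C)\ge\alpha(C)$ is immediate from the definitions: every translate of $C$ is a positive homothet of $C$, so the supremum defining $\beta(C)$ ranges over a strictly larger collection of families than that defining $\alpha(C)$.

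For the main inequality $\alpha(C)\ge\theta_T(C)/\delta_T(C)$, my plan is to produce, for every $\varepsilon>0$, a finite family $\F$ of translates of $C$ with $\tau(\F)/\nu(\F)\ge\theta_T(C)/\delta_T(C)-\varepsilon$, and then let $\varepsilon\to 0$. Fix a large cube $Q_r$ of side $r$. As my candidate I would take $\F$ to be the set of translates meeting $Q_r$ in a near-optimal translative covering of $\RR^d$ by translates of $C$ of density at most $\theta_T(C)+\varepsilon$; this gives $|\F|\le(\theta_T(C)+O(\varepsilon))\,r^d/|C|$ for $r$ large. The upper bound
\[
\nu(\F)\le(\delta_T(C)+o(1))\,r^d/|C|
\]
is the easy step, since any pairwise-disjoint subfamily of $\F$ is a translative packing of $C$ contained in an $O(1)$-neighborhood of $Q_r$ and is therefore controlled by the translative packing density.

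The hard part, and the main obstacle, is the matching lower bound $\tau(\F)\ge(\theta_T(C)-o(1))\,r^d/|C|$. I would approach it by choosing the covering so that essentially every translate is \emph{essential}, i.e.\ contains a point covered only by itself (automatic for a minimum covering), and then combining LP duality (which gives $\tau(\F)\ge\nu^{*}(\F)=\tau^{*}(\F)$) with a careful multiplicity count to show that no single piercing point can absorb too many members of $\F$. The subtle point is that in a covering of density greater than $1$ many points must lie in several translates at once, so extracting the full factor $\theta_T(C)$ in front of $r^d/|C|$ (rather than the weaker $1/\delta_T(C)$ that a naive volume argument yields) requires exploiting the near-minimality of the covering in a quantitatively sharp way. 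Passing to the limit $r\to\infty$ and then $\varepsilon\to 0$ would give $\alpha(C)\ge\theta_T(C)/\delta_T(C)$.

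Finally, for $C=B^d$ no further construction is needed once the general inequality is in hand: any translative covering of $\RR^d$ by unit balls has density at least $1$, so $\theta_T(B^d)\ge 1$; on the other hand the Kabatianski--Levenshtein upper bound on sphere-packing densities gives $\delta_T(B^d)\le 2^{-(0.599+o(1))d}$ as $d\to\infty$. Substituting these into $\alpha(C)\ge\theta_T(C)/\delta_T(C)$ yields
\[
\beta(B^d)\ge\alpha(B^d)\ge\frac{\theta_T(B^d)}{\delta_T(B^d)}\ge 2^{(0.599-o(1))d},
\]
which is the claimed asymptotic.
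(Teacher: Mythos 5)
Your first inequality and the final specialization to $C=B^d$ (namely $\theta_T(B^d)\ge 1$ together with the Kabatjanski\u{\i}--Leven\v{s}te\u{\i}n bound on $\delta_T(B^d)$) are fine and agree with the paper. The gap is exactly at the step you yourself flag as ``the hard part'': for your candidate family --- the translates occurring in a near-optimal translative covering of a large cube --- the bound $\tau(\F)\ge(\theta_T(C)-o(1))\,r^d/|C|$ is not just unproven, it is false. Take $C$ a disk and $\F$ the disks of the thinnest (hexagonal) covering restricted to a large square: each point of the plane lies in at most three of these disks, and one can pick a periodic set of triple points (one vertex of each hexagonal Voronoi cell, chosen from a sublattice of index three) that hits every disk exactly once; moreover the only available center distances are $\sqrt3$ (overlapping) and $\ge 3$ (disjoint), so a maximum disjoint subfamily is an independent set in the triangular-lattice graph. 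Both $\tau(\F)$ and $\nu(\F)$ are then $\sim|\F|/3$, so $\tau(\F)/\nu(\F)\to 1$, well below $\theta_T/\delta_T\approx 1.33$ and far below the $\tau(\F)\approx|\F|$ your plan requires. No amount of LP duality or multiplicity counting can rescue this particular family; the issue is that in a thin covering a piercing point placed at a multiply covered point absorbs several members, and nothing in the construction prevents an essentially perfect such absorption.

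The paper avoids this by a different construction in which the covering density enters through the piercing set rather than through the family itself. It takes $\F_n$ to be translates of $C$ whose reference points form a very fine grid (spacing $1/n$) filling a cube of side about $n$. By Lemma~\ref{lem:ccc}(i), a piercing set $S_n$ corresponds to a covering of the grid points by translates of $-C$; because the grid is dense, enlarging these to translates of $-(1+\tfrac1n)C$ forces a covering of a \emph{solid} cube $H(n)$, so $\tau(\F_n)\ge\theta_T\big((1+\tfrac1n)C,H(n)\big)\,n^d/\big((1+\tfrac1n)^d|C|\big)$ by a plain volume argument, while $\nu(\F_n)$ is bounded above by the packing density via the same kind of volume argument. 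Taking $n\to\infty$ gives $\alpha(C)\ge\theta_T(C)/\delta_T(C)$. That ``density trick'' --- making the family so dense that any transversal induces a covering of a full-dimensional region --- is the idea missing from your outline, and without it (or a substitute of comparable strength) the main inequality is not established.
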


Kim \etal~\cite{KNPS06} asked whether the upper bound
$\tau(\F) \le 3 \cdot \nu(\F)$
holds for any family $\F$ of translates of a centrally symmetric convex body
in the plane.
This upper bound, if true, is best possible because
there exists a family $\F$ of congruent disks
(i.e., translates of a disk)
such that $\tau(\F) = 3 \cdot \nu(\F)$ for any $\nu(\F) \ge 1$~\cite{Gr59};
see also~\cite[Example~10]{KNPS06}.
On the other hand,
Karasev~\cite{Ka00} proved that $\tau(\F) \le 3 \cdot \nu(\F) = 3$
for any family $\F$ of pairwise-intersecting translates of a convex body
in the plane.
Also, for any family $\F$ of congruent disks such that $\nu(\F) = 2$,
Kim \etal~\cite{KNPS06} confirmed that $\tau(\F) \le 3 \cdot \nu(\F) = 6$.
Our Corollary~\ref{cor:parallelepipedt} confirms that
$\tau(\F) \le 2 \cdot \nu(\F)$
for any family $\F$ of translates of a parallelogram.
The following theorem
confirms the upper bound $\tau(\F) \le 3 \cdot \nu(\F)$
for another special case:

\begin{theorem}\label{thm:hexagon}
For any family $\F$ of translates of
a centrally symmetric convex hexagon,
$\tau(\F) \le 3 \cdot \nu(\F)$.
Moreover, if $\nu(\F) = 1$, then $\tau(\F) \le 2$.
\end{theorem}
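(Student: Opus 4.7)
To prove $\tau(\F)\le 3\nu(\F)$ I plan to refine the greedy scheme of Theorem~\ref{thm:greedyt} to exploit the hexagonal structure. Fix a direction $n$ normal to one of the three pairs of parallel edges of $H$, and order the hexagons in $\F$ by decreasing value of $n\cdot a_i$, where $a_i$ is the centre of the $i$-th translate. Process the hexagons in this order: whenever the current hexagon $H+a$ is not yet pierced, add three new piercing points, chosen so as to pierce every later-processed hexagon $H+b$ that intersects $H+a$. Because every such $b$ satisfies $b-a\in(2H)\cap L$ with $L=\{x:n\cdot x\le 0\}$, and because a point $p$ pierces $H+b$ iff $b\in p+H$ (using central symmetry), the argument reduces to a geometric lemma: the region $(2H)\cap L$ is covered by three translates of $H$. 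I would establish this lemma by an explicit placement of the three covering hexagons along the lower boundary of $2H$, using the three edge vectors of $H$ so that consecutive covering hexagons share an edge; coverage is then verified slab by slab at the four corners of the trapezoidal region $(2H)\cap L$. Once the lemma is in hand, the hexagons that trigger new piercing points are pairwise disjoint (otherwise, a later one would already be pierced), so their number is at most $\nu(\F)$, giving $\tau(\F)\le 3\nu(\F)$.

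To prove $\tau(\F)\le 2$ when $\nu(\F)=1$, I would use the slab decomposition $H=S_1\cap S_2\cap S_3$ with normals $n_1,n_2,n_3$ and half-widths $h_1,h_2,h_3$. Pairwise intersection of the translates implies pairwise intersecting one-dimensional projections in each slab direction, and one-dimensional Helly provides values $t_1,t_2,t_3$ with $|t_k-n_k\cdot a_i|\le h_k$ for every $i$ and $k$. If the overdetermined linear system $n_k\cdot p=t_k$ happens to be consistent over $\RR^2$, then a single point $p$ pierces every translate. Otherwise, consider the parallelogram $P=S_1\cap S_2\supseteq H$: since the piercing number for pairwise-intersecting translates of a parallelogram is $1$, Helly on parallelograms yields a point $p^*\in\bigcap_i(P+a_i)$. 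Every centre $a_i$ missed by $p^*$ satisfies $|n_3\cdot(p^*-a_i)|>h_3$, and a pairwise-distance argument using $|n_3\cdot(a_i-a_j)|\le 2h_3$ forces all missed centres to lie on the same side of $p^*$ in direction $n_3$. A second piercing point $q$ for these missed centres is then constructed by exploiting this one-sidedness: the extra slack in the $n_3$ direction, combined with the freedom to reposition $p^*$ within $\bigcap_i(P+a_i)$, is used to satisfy the remaining $n_3$ slab constraint on the missed subset.

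The main obstacle will be the final step of the second part: verifying that a valid second piercing point $q$ always exists for the missed centres. I expect this to reduce, after the one-sidedness observation, to an interval-compatibility check that uses the linear dependence $n_3=\alpha n_1+\beta n_2$ among the slab normals together with the $h_3$ of slack gained from restricting attention to a subset that is one-sided in direction $n_3$.
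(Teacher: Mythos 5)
Your first part rests on the claimed lemma that $(2H)\cap L$ can be covered by three translates of $H$, where $L$ is the closed half-plane bounded by a line through the centre parallel to a pair of edges. This lemma is false already for the affinely regular hexagon, which the theorem must cover. Normalize $H$ to the regular hexagon with vertices $(\pm 1,0)$, $(\pm\tfrac12,\pm\tfrac{\sqrt3}{2})$ and take $n=(0,1)$; then $(2H)\cap L$ is the trapezoid with vertices $(\pm2,0)$, $(\pm1,-\sqrt3)$, whose top edge has length $4$ and bottom edge length $2$. A translate of $H$ meets the two lines $y=0$ and $y=-\sqrt3$ in segments whose lengths sum to at most $2$, with equality only when its centre is at height $0$, $-\tfrac{\sqrt3}{2}$, or $-\sqrt3$ (contributing lengths $(2,0)$, $(1,1)$, $(0,2)$ respectively). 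Three translates must supply total length at least $4+2=6$, so equality is forced everywhere and both edges must be tiled exactly. The only two configurations, namely centres $(\pm1,0)$ and $(0,-\sqrt3)$, or one centre at height $0$ together with $(\pm\tfrac12,-\tfrac{\sqrt3}{2})$, fail: the first leaves interior points such as $(0,-\tfrac{\sqrt3}{4})$ uncovered, and the second cannot cover both ends $[-2,-1]$ and $[1,2]$ of the top edge with the single remaining interval. Hence $\kappa\bigl((2H)\cap L,H\bigr)\ge 4$, and by a compactness argument one can realize centres witnessing this by an actual subfamily all intersecting the topmost translate, so your greedy step genuinely needs four points per disjoint representative. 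That only reproduces the bound $\tau(\F)\le 4\cdot\nu(\F)-1$ of Theorem~\ref{thm:greedyt}, not $3\cdot\nu(\F)$. The paper takes a different route entirely: it shows $\gamma(H)\le 3$ by sandwiching $H$ between two parallel parallelograms with length ratios at most $(2,1)$ and invokes Theorem~\ref{thm:conv}, where the factor $3$ arises from grouping a grid of parallel lines into three residue classes, with a single piercing point per greedy step along each line.

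For the second part, your slab reduction and the one-sidedness of the missed centres are correct, but the decisive step---that a second piercing point $q$ always exists---is precisely what you leave unverified, and it is where the hexagonal structure has to enter (for a general centrally symmetric body the analogous claim would give $\alpha_1\le 2$, which is false, e.g., for the disk). The paper closes this gap by observing that all centres lie in $H'=S_1'\cap S_2'\cap S_3'$ and that any two of the three translates $H_{12}\subseteq S_1'\cap S_2'$, $H_{13}\subseteq S_1'\cap S_3'$, $H_{23}\subseteq S_2'\cap S_3'$ already cover $H'$, so by central symmetry the centres of two of them pierce every member. You would need to prove a statement of exactly this strength to finish; as written, the second half is incomplete and the first half is not repairable in its current form.
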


A hexagon $p_1p_2p_3p_4p_5p_6$ is affinely regular if and only if
(i) it is centrally symmetric and convex, and 
(ii) $\overrightarrow{p_2p_1} + \overrightarrow{p_2p_3} = \overrightarrow{p_3p_4}$.
Note that a centrally symmetric convex hexagon is not necessarily
affinely regular. 
Gr\"unbaum~\cite{Gr59} showed that $\alpha_1(C) = 2$ for any affinely
regular hexagon $C$. 
Theorem~\ref{thm:hexagon} implies a stronger and more general result that
$2 = \alpha_1(C) \le \alpha(C) \le 3$
for any centrally symmetric convex hexagon $C$.
Theorem~\ref{thm:greedyt} (i), (ii), and (iii) imply that
$\alpha(C) \le 2$ for any square $C$,
$\alpha(C) \le 5$ for any triangle $C$,
and
$\alpha(C) \le 4$ for any disk $C$.
Theorem~\ref{thm:greedyh} (i), (ii), and (iii) imply that
$\beta(C) \le 4$ for any square $C$,
$\beta(C) \le 12$ for any triangle $C$,
and
$\beta(C) \le 7$ for any disk $C$.
We also have the lower bounds
$\beta(C) \ge \alpha(C) \ge \frac32 $ for any square $C$~\cite{GL85},
$\beta(C) \ge \alpha(C) \ge \alpha_1(C) = 3$ for any triangle $C$~\cite{CS67},
$\alpha(C) \ge \alpha_1(C) = 3$ and
$\beta(C) \ge \beta_1(C) = 4$ for any disk $C$~\cite{Gr59,Da86}.
Table~\ref{tab2} summarizes
the current best bounds on $\alpha(C)$ and $\beta(C)$
for some special convex bodies $C$ in the plane.
 
\begin{table}[htbp]
\centering
\begin{tabular}{l|lr|lr|lr|lr}
\hline
\textsl{Special convex body $C$ in the plane}
	& \multicolumn{2}{c|}{\textsl{$\alpha(C)$ lower}}
	& \multicolumn{2}{c|}{\textsl{$\alpha(C)$ upper}}
	& \multicolumn{2}{c|}{\textsl{$\beta(C)$ lower}}
	& \multicolumn{2}{c}{\textsl{$\beta(C)$ upper}}
\\
\hline
centrally symmetric convex hexagon
	& $2$ & \cite{Gr59}
	& $3$ & T\ref{thm:hexagon}
	& $2$ & \cite{Gr59}
	& $7$ & T\ref{thm:greedyh}
\\
square
	& $\frac32$ & \cite{GL85}
	& $2$ & T\ref{thm:greedyt}
	& $\frac32$ & \cite{GL85}
	& $4$ & T\ref{thm:greedyh}
\\
triangle
	& $3$ & \cite{CS67}
	& $5$ & T\ref{thm:greedyt}
	& $3$ & \cite{CS67}
	& $12$ & T\ref{thm:greedyh}
\\
disk
	& $3$ & \cite{Gr59}
	& $4$ & T\ref{thm:greedyt}
	& $4$ & \cite{Gr59}
	& $7$ & T\ref{thm:greedyh}
\\
\hline
\end{tabular}
\caption{\small Lower and upper bounds on $\alpha(C)$ and $\beta(C)$
for some special convex bodies $C$ in the plane.} 
\label{tab2}
\end{table}

\section{Upper bound for translates of an arbitrary convex body in $\RR^d$}
\label{sec:conv}

In this section we prove Theorem~\ref{thm:conv}.
Let $\F$ be a family of translates of a convex body $C$ in $\RR^d$.
Let $P$ and $Q$ be any two parallelepipeds in $\RR^d$
that are parallel to each other, such that $P \subseteq C \subseteq Q$.
Since the two values $\tau(\F)$ and $\nu(\F)$ are invariant
under any non-singular affine transformation of $C$,
we can assume that $P$ and $Q$
are axis-parallel and have edge lengths $1$ and $e_i$, respectively,
along the axis $x_i$, $1\le i \le d$.

We first show that $\tau(\T) \le \lceil e_d \rceil \cdot \nu(\T)$
for any family $\T$ of $C$-translates whose corresponding $P$-translates
intersect a common line $\ell$ parallel to the axis $x_d$.
Define the \emph{$x_d$-coordinate} of a $C$-translate
as the smallest $x_d$-coordinate of a point in the corresponding $P$-translate.
Set $\T_1 = \T$,
let $C_1$ be the $C$-translate in $\T_1$ with the smallest $x_d$-coordinate,
and let $\S_1$ be the subfamily of $C$-translates in $\T_1$
that intersect $C_1$ ($\S_1$ includes $C_1$ itself).
Then, for increasing values of $i$,
while $\T_i = \T \setminus \bigcup_{j=1}^{i-1} \S_j$ is not empty,
let $C_i$ be the $C$-translate in $\T_i$ with the smallest $x_d$-coordinate,
and let $\S_i$ be the subfamily of $C$-translates in $\T_i$
that intersect $C_i$.
The iterative process ends with a partition $\T = \bigcup_{i=1}^m \S_i$,
where $m \le \nu(\T)$.

Denote by $c_i$ the $x_d$-coordinate of $C_i$.
Then each $C$-translate in the subfamily $\S_i$,
which is contained in a $Q$-translate of edge length $e_d$
along the axis $x_d$,
has an $x_d$-coordinate of at least $c_i$ and at most $c_i + e_d$,
and the corresponding $P$-translate,
whose edge length along the axis $x_d$ is $1$,
contains at least one of the $\lceil e_d \rceil$ points on $\ell$
with $x_d$-coordinates $c_i + 1, \ldots, c_i + \lceil e_d \rceil$.
These $\lceil e_d \rceil$ points form a piercing set for $\S_i$, 
hence $\tau(\S_i) \le \lceil e_d \rceil$. It follows that
\begin{equation}\label{eq:1}
\tau(\T) \le \sum_{i=1}^{m} \tau(\S_i)
	\le \lceil e_d \rceil \cdot m
	\le \lceil e_d \rceil \cdot \nu(\T).
\end{equation}

For $(a_1,\ldots,a_{d-1}) \in \RR^{d-1}$,
denote by $\ell(a_1,\ldots,a_{d-1})$ the following line in $\RR^d$
that is parallel to the axis $x_d$:
$$
\{\,
(x_1,\ldots,x_d) \mid
(x_1,\ldots,x_{d-1}) = (a_1,\ldots,a_{d-1})
\,\}.
$$
Now consider the following (infinite) set $\L$ of parallel lines:
$$
\{\,
\ell(j_1 + b_1, \ldots, j_{d-1} + b_{d-1})
	\mid (j_1,\ldots,j_{d-1}) \in \ZZ^{d-1}
\,\},
$$
where $(b_1,\ldots,b_{d-1}) \in \RR^{d-1}$
is chosen such that no line in $\L$ is tangent to the $P$-translate
of any $C$-translate in $\F$.
Recall that $P$ and $Q$ are axis-parallel
and have edge lengths $1$ and $e_i$, respectively,
along the axis $x_i$, $1\le i \le d$.
So we have the following two properties:
\begin{enumerate}
\item
For any $C$-translate in $\F$,
the corresponding $P$-translate intersects exactly one line in $\L$.
\item
For any two $C$-translates in $\F$,
if the two corresponding $P$-translates intersect
two different lines in $\L$ of distance at least $e_i + 1$
along some axis $x_i$, $1\le i \le d-1$,
then the two $C$-translates are disjoint.
\end{enumerate}

Partition $\F$ into subfamilies $\F(j_1,\ldots,j_{d-1})$
of $C$-translates whose corresponding $P$-translates
intersect a common line $\ell(j_1 + b_1, \ldots, j_{d-1} + b_{d-1})$.
Let $\F'(k_1,\ldots,k_{d-1})$ be the union of the families
$\F(j_1,\ldots,j_{d-1})$ such that
$j_i \bmod \lceil e_i + 1 \rceil = k_i$ for $1 \le i \le d-1$.
It follows from \eqref{eq:1} that
the transversal number of each subfamily $\F'(k_1,\ldots,k_{d-1})$
is at most $\lceil e_d \rceil$ times its packing number.
Therefore we have
\begin{align}\label{eq:d}
\tau(\F)
\le \sum_{(k_1,\ldots,k_{d-1})} \tau\left( \F'(k_1,\ldots,k_{d-1}) \right)
\nonumber
&\le \lceil e_d \rceil
	\sum_{(k_1,\ldots,k_{d-1})} \nu\left( \F'(k_1,\ldots,k_{d-1}) \right)
\nonumber
\\
&\le \left(
	\lceil e_d \rceil \prod_{i=1}^{d-1} \lceil e_i + 1 \rceil 
	\right) \cdot \nu(\F).
\end{align}

Since \eqref{eq:d} holds for any pair of parallelepipeds $P$ and $Q$ in $\RR^d$
that are parallel to each other and satisfy $P \subseteq C \subseteq Q$,
it follows by the definition of $\gamma(C)$ that
$\tau(\F) \le \gamma(C) \cdot \nu(\F)$.
By Lemma~\ref{lem:cs},
there indeed exist two such parallelepipeds $P$ and $Q$
with length ratios $\lambda_i(P,Q) = d$ for $1 \le i \le d$.
It then follows that $\gamma(C) \le d(d+1)^{d-1}$
for any convex body $C$ in $\RR^d$.
This completes the proof of Theorem~\ref{thm:conv}.

\section{Upper bound for translates of a centrally symmetric convex body in $\RR^d$}
\label{sec:symm}

In this section we prove Theorem~\ref{thm:symm}.
Recall that $|C|$ is the Lebesgue measure of a convex body $C$ in $\RR^d$,
and that $|\F|$ is the Lebesgue measure of the union of
a family $\F$ of convex bodies in $\RR^d$.
To establish the desired bound on $\tau(\F)$ in terms of $\nu(\F)$
for any family $\F$ of translates
of a centrally symmetric convex body $S$ in $\RR^d$,
we link both $\tau(\F)$ and $\nu(\F)$ to the ratio $|\F|/|S|$.
We first prove a lemma that links the transversal number $\tau(\F)$
to the ratio $|\F|/|S|$ via the lattice covering density of $S$:

\begin{lemma}\label{lem:tau}
Let $\F$ be a family of translates of a centrally symmetric convex body $S$
in $\RR^d$.
If there is a lattice covering of $\RR^d$ with translates of $S$
whose covering density is $\theta$, $\theta \ge 1$,
then $\tau(\F) \le \theta \cdot |\F|/|S|$.
\end{lemma}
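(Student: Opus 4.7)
\textbf{Proof plan for Lemma~\ref{lem:tau}.}
My plan is to pierce $\F$ with a suitably chosen translate of the covering lattice, then bound its size by a first-moment argument. Without loss of generality, translate $S$ so that its center of symmetry is the origin, so that $-S=S$. Let $\Lambda$ be a lattice such that $\{S+v:v\in\Lambda\}$ covers $\RR^d$ with density $\theta$; equivalently, $\det(\Lambda)=|S|/\theta$. The key consequence of central symmetry that I will use is the duality
\[
p\in S+c\ \iff\ c-p\in S\ \iff\ p-c\in S\ \iff\ c\in S+p,
\]
which lets me turn a statement about covering by $S$-translates into a statement about piercing $S$-translates.

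For any shift vector $t\in\RR^d$, consider the translated lattice $\Lambda+t$, which also yields a covering of $\RR^d$ by $\{S+v:v\in\Lambda+t\}$. For each $S+c_i\in\F$, the covering property applied at the point $c_i$ produces some $v_i\in\Lambda+t$ with $c_i\in S+v_i$; by the duality above, this means $v_i\in S+c_i$. Hence the finite set
\[
P(t)\ :=\ (\Lambda+t)\cap\textstyle\bigcup\F
\]
contains each $v_i$ and therefore pierces every member of $\F$, so $\tau(\F)\le|P(t)|$ for every shift $t$.

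It remains to choose $t$ to keep $|P(t)|$ small. Let $\Delta$ be a fundamental domain for $\Lambda$, so $|\Delta|=\det(\Lambda)=|S|/\theta$, and let $t$ be distributed uniformly in $\Delta$. A standard Minkowski-style calculation gives, for any measurable $U\subseteq\RR^d$,
\[
\mathbb{E}\bigl[|(\Lambda+t)\cap U|\bigr]
=\sum_{v\in\Lambda}\Pr[t\in U-v]
=\frac{1}{|\Delta|}\sum_{v\in\Lambda}|(U-v)\cap\Delta|
=\frac{|U|}{|\Delta|},
\]
the last equality using that the translates $\{\Delta+v:v\in\Lambda\}$ tile $\RR^d$. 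Taking $U=\bigcup\F$ yields
\[
\mathbb{E}\bigl[|P(t)|\bigr]=\frac{|\F|}{\det(\Lambda)}=\theta\cdot\frac{|\F|}{|S|}.
\]
Hence some shift $t^\ast\in\Delta$ achieves $|P(t^\ast)|\le\theta\cdot|\F|/|S|$, and this $P(t^\ast)$ is a valid transversal of $\F$, proving the claim.

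The only subtle point is the step that converts the covering at $c_i$ into a lattice point inside $S+c_i$; this is precisely where central symmetry of $S$ is essential, since otherwise the covering by $S$-translates would only give piercing by translates of $-S$. Everything else is routine: construction of the piercing set is explicit once $t^\ast$ is fixed, and the averaging argument is the standard mean-value statement for the number of lattice points of a random shift of $\Lambda$ in a fixed region.
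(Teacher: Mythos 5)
Your proof is correct and is essentially the paper's own argument: both use the central-symmetry duality ($c_i\in S+v$ iff $v\in S+c_i$) to reduce piercing to finding lattice points inside $\bigcup\F$, and both find a good lattice shift by averaging. The only difference is cosmetic packaging — you phrase the averaging as a first-moment bound over a uniformly random shift, while the paper folds the union into one fundamental cell and applies the pigeonhole principle to pick the shift.
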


\begin{proof}
Denote by $S_p$ a translate of the convex body $S$ centered at a point $p$.
Since $S$ is centrally symmetric,
for any two points $p$ and $q$,
$p$ intersects $S_q$ if and only if $q$ intersects $S_p$.
Given a lattice covering of $\RR^d$ with translates of $S$,
every point $p \in \RR^d$ is contained in some translate $S_q$
in the lattice covering,
hence every translate $S_p$ contains some lattice point $q$.

Let $\Lambda$ be a lattice such that
the corresponding lattice covering with translates of $S$
has a covering density of $\theta$.
Divide the union of the convex bodies in $\F$ into pieces
by the cells of the lattice $\Lambda$,
then translate all cells (and the pieces)
to a particular cell, say $\sigma$.
By the pigeonhole principle, there exists a point in $\sigma$, say $p$,
that is covered at most $\lfloor |\F|/|\sigma| \rfloor$ times
by the overlapping pieces of the union.
Let $k$ be the number of times that $p$ is covered by the pieces.
Now fix $\F$ but translate the lattice $\Lambda$ to $\Lambda'$
until $p$ becomes a lattice point of $\Lambda'$.
Then exactly $k$ lattice points of $\Lambda'$
are covered by the $S$-translates in $\F$.
Since every $S$-translate in $\F$ contains some lattice point of $\Lambda'$,
we have obtained a transversal of $\F$ consisting of
$k \le \lfloor |\F|/|\sigma| \rfloor$ lattice points of $\Lambda'$.
Note that $\theta = |S|/|\sigma|$,
and the proof is complete.
\end{proof}

The following lemma\footnote{The planar case of Lemma~\ref{lem:nu}
is also implied by~\cite[Theorem~5]{BDJ08b}.}
is a dual of the previous lemma,
and links the packing number $\nu(\F)$
to the ratio $|\F|/|S|$ via the lattice packing density of $S$:

\begin{lemma}\label{lem:nu}
Let $\F$ be a family of translates of a centrally symmetric convex body $S$
in $\RR^d$.
If there is a lattice packing in $\RR^d$ with translates of $S$
whose packing density is $\delta$, $\delta \le 1$,
then $\nu(\F) \ge \frac{\delta}{2^d} \cdot |\F|/|S|$.
\end{lemma}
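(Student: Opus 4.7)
The plan is to dualize the proof of Lemma~\ref{lem:tau}, replacing the given packing lattice by a suitably scaled copy so that lattice points falling in $\bigcup \F$ can be used to select pairwise-disjoint $S$-translates. Concretely, I would work with $\Lambda' = 2\Lambda$, where $\Lambda$ is the given packing lattice. Its fundamental cell $\sigma'$ has volume $|\sigma'| = 2^d|\sigma| = 2^d|S|/\delta$, and every nonzero vector of $\Lambda'$ lies outside $4S$, since every nonzero $\lambda \in \Lambda$ avoids $2S$ and hence $2\lambda$ avoids $4S$.

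For a translation $t$ chosen uniformly in $\sigma'$, let $\Lambda'_t = t + \Lambda'$. I would verify two key properties. First, each $S$-translate $S + c_i$ in $\F$ contains at most one lattice point of $\Lambda'_t$, because two distinct lattice points in $S + c_i$ would differ by a nonzero vector of $\Lambda'$ lying in $S - S = 2S \subseteq 4S$, contradicting the packing property of $\Lambda'$. Second, if distinct lattice points $\lambda'_1, \lambda'_2 \in \Lambda'_t$ lie in $S + c_{i_1}$ and $S + c_{i_2}$ respectively, then $S + c_{i_1}$ and $S + c_{i_2}$ are disjoint: otherwise $c_{i_2} - c_{i_1} \in 2S$, and combining this with $c_{i_j} - \lambda'_j \in -S = S$ via the identity $\lambda'_2 - \lambda'_1 = (c_{i_2} - c_{i_1}) - (c_{i_2} - \lambda'_2) + (c_{i_1} - \lambda'_1)$ would place $\lambda'_2 - \lambda'_1$ into $2S + S + S = 4S$, again contradicting the packing property of $\Lambda'$.

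Armed with these properties, I would form a packing subfamily $\P_t \subseteq \F$ by selecting, for each lattice point $\lambda' \in \Lambda'_t \cap \bigcup \F$, one $S$-translate of $\F$ containing $\lambda'$; the first property makes this selection injective and the second makes the selected $S$-translates pairwise disjoint, so $|\P_t|$ equals the number of lattice points of $\Lambda'_t$ in $\bigcup \F$. A standard averaging argument over $t$, analogous to the one in Lemma~\ref{lem:tau}, then yields some $t$ for which $|\P_t| \ge |\bigcup \F|/|\sigma'| = \delta |\F|/(2^d|S|)$, proving the claim. The main obstacle is choosing the scale of the lattice: using the original $\Lambda$ (whose nonzero vectors only avoid $2S$) would suffice for the first property but fail in the second, since the two ``slack'' terms $c_{i_j} - \lambda'_j \in S$ can together push $\lambda'_2 - \lambda'_1$ back into $4S$; doubling the lattice is precisely the trade-off that restores disjointness, at the cost of the factor $2^d$ in the bound.
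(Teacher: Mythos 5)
Your proposal is correct and follows essentially the same route as the paper's proof: the doubled lattice $2\Lambda$ is exactly the lattice packing of $2S$ (of the same density $\delta$) used there, and the averaging over translations $t$ is the paper's pigeonhole step over a fundamental cell. The only cosmetic difference is that you verify disjointness of translates containing distinct lattice points by a direct Minkowski-sum computation placing $\lambda_2'-\lambda_1'$ in $4S$, while the paper notes that by central symmetry an $S$-translate containing a lattice point lies inside the $2S$-translate centered there, and those $2S$-translates are pairwise disjoint.
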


\begin{proof}
Let $S'$ be a homothet of $S$ scaled up by a factor of $2$.
Since $S$ is centrally symmetric,
an $S$-translate is contained by an $S'$-translate if and only if
the $S$-translate contains the center of the $S'$-translate.
Given a lattice packing in $\RR^d$ with translates of $S'$,
two $S'$-translates centered at two different lattice points are disjoint,
hence two $S$-translates containing two different lattice points
are disjoint.

Let $\Lambda$ be a lattice such that
the corresponding lattice packing with translates of $S'$
has a packing density of $\delta$
(such a lattice exists because $S'$ is homothetic to $S$).
Divide the union of the convex bodies in $\F$ into pieces
by the cells of the lattice $\Lambda$,
then translate all cells (and the pieces)
to a particular cell, say $\sigma$.
By the pigeonhole principle, there exists a point in $\sigma$, say $p$,
that is covered at least $\lceil |\F|/|\sigma| \rceil$ times
by the overlapping pieces of the union.
Let $k$ be the number of times that $p$ is covered by the pieces.
Now fix $\F$ but translate the lattice $\Lambda$ to $\Lambda'$
until $p$ becomes a lattice point of $\Lambda'$.
Then exactly $k$ lattice points of $\Lambda'$
are covered by the $S$-translates in $\F$.
Choose $k$ translates in $\F$,
each containing a distinct lattice point of $\Lambda'$.
Since any two $S$-translates
containing two different lattice points of $\Lambda'$ are disjoint,
we have obtained a subset of $k \ge \lceil |\F|/|\sigma| \rceil$
pairwise-disjoint $S$-translates in $\F$.
Note that $\delta = |S'|/|\sigma| = 2^d |S|/|\sigma|$,
and the proof is complete.
\end{proof}

By Lemma~\ref{lem:tau} and Lemma~\ref{lem:nu} we have,
for any family $\F$ of translates of a centrally symmetric convex body
in $\RR^d$,
$$
\tau(\F) \le \theta_L(S) \cdot \frac{|\F|}{|S|}
	=  2^d \cdot \frac{\theta_L(S)}{\delta_L(S)} \cdot
		\frac{\delta_L(S)}{2^d} \cdot \frac{|\F|}{|S|}
	\le  2^d \cdot \frac{\theta_L(S)}{\delta_L(S)} \cdot
		\nu(\F).
$$
Smith~\cite{Sm06} proved that,
for any centrally symmetric convex body $S$ in 3-space,
$\theta_L(S) \le 3 \cdot \delta_L(S)$.
This immediately implies that,
for any family $\F$ of translates of a centrally symmetric convex body $S$
in 3-space,
$\tau(\F) \le 2^3 \cdot 3 \cdot \nu(\F) = 24 \cdot \nu(\F)$.
A similar inequality for the planar case was proved by Kuperberg~\cite{Ku87}:
for any (not necessarily centrally symmetric) convex body $C$ in the plane,
$\theta(C) \le \frac43 \cdot \delta(C)$.
However, this result is not about lattice covering and packing,
so we cannot use it to obtain the bound in Theorem~\ref{thm:symm}
for the planar case. Instead, we prove the following ``sandwich'' lemma:

\begin{lemma}
Let $\F$ be a family of translates of a (not necessarily centrally symmetric)
convex body $C$ in $\RR^d$.
Let $A$ and $B$ be two centrally symmetric convex bodies in $\RR^d$ such that
$A \subseteq C \subseteq B$. Then
$$
\tau(\F) \le 2^d \cdot \frac{|B|}{|A|} \cdot \frac{\theta_L(A)}{\delta_L(B)}
\cdot \nu(\F).
$$
\end{lemma}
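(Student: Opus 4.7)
The plan is to sandwich the family $\F$ between an ``inner'' family of $A$-translates and an ``outer'' family of $B$-translates, then apply Lemma~\ref{lem:tau} to the inner family to control $\tau(\F)$ and Lemma~\ref{lem:nu} to the outer family to control $\nu(\F)$, tying the two together through the common union measure. The four ingredients are:

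\medskip

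\emph{Building the two auxiliary families.} Since $A \subseteq C \subseteq B$, for each $C$-translate $C_i \in \F$, say $C_i = C + v_i$, let $A_i = A + v_i$ and $B_i = B + v_i$, so that $A_i \subseteq C_i \subseteq B_i$. Write $\F_A = \{A_i\}$ and $\F_B = \{B_i\}$. The chain of inclusions gives the union bound $|\F_A| \le |\F_B|$ immediately. Also, any transversal of $\F_A$ is a transversal of $\F$ (because $A_i \subseteq C_i$), so $\tau(\F) \le \tau(\F_A)$; and any packing in $\F_B$ gives a packing in $\F$ (because $C_i \subseteq B_i$), so $\nu(\F_B) \le \nu(\F)$.

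\medskip

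\emph{Upper bound on $\tau(\F)$ via the inner family.} Since $A$ is centrally symmetric, Lemma~\ref{lem:tau} applies to $\F_A$, yielding
$$
\tau(\F) \;\le\; \tau(\F_A) \;\le\; \theta_L(A) \cdot \frac{|\F_A|}{|A|} \;\le\; \theta_L(A) \cdot \frac{|\F_B|}{|A|}.
$$

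\medskip

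\emph{Lower bound on $\nu(\F)$ via the outer family.} Since $B$ is centrally symmetric, Lemma~\ref{lem:nu} applies to $\F_B$, giving
$$
\nu(\F) \;\ge\; \nu(\F_B) \;\ge\; \frac{\delta_L(B)}{2^d} \cdot \frac{|\F_B|}{|B|},
$$
which we rearrange into the bound $|\F_B| \le \dfrac{2^d\,|B|}{\delta_L(B)}\cdot\nu(\F)$.

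\medskip

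\emph{Combining.} Substituting this into the previous display produces
$$
\tau(\F) \;\le\; \theta_L(A) \cdot \frac{1}{|A|} \cdot \frac{2^d\,|B|}{\delta_L(B)} \cdot \nu(\F) \;=\; 2^d \cdot \frac{|B|}{|A|} \cdot \frac{\theta_L(A)}{\delta_L(B)} \cdot \nu(\F),
$$
as desired. There is no genuine obstacle here: the only thing to verify is that the union inequality $|\F_A| \le |\F_B|$ lets us measure the ``same'' quantity through both the packing lemma and the transversal lemma, and this follows directly from $A_i \subseteq B_i$ pointwise. The same blueprint as in the proof of Theorem~\ref{thm:symm} (where $A = B = S$) goes through verbatim, with the extra factor $|B|/|A|$ accounting for the mismatch between the inner and outer symmetric sandwich bodies.
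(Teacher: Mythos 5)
Your proposal is correct and takes essentially the same approach as the paper: bound $\tau(\F)$ from above through Lemma~\ref{lem:tau} applied with the inscribed centrally symmetric body $A$, bound $\nu(\F)$ from below through Lemma~\ref{lem:nu} applied with the circumscribed body $B$, and combine the two via a common union measure. The only cosmetic difference is that you introduce the explicit auxiliary families $\F_A$ and $\F_B$ and compare through $|\F_B|$ (using $\tau(\F)\le\tau(\F_A)$, $|\F_A|\le|\F_B|$, $\nu(\F)\ge\nu(\F_B)$), whereas the paper applies the two lemmas directly to $\F$ with $|\F|$ as the common quantity; the underlying argument is the same.
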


\begin{proof}
Since $A \subseteq C$, it follows by Lemma~\ref{lem:tau} that
$$
\tau(\F) \le \theta_L(A) \cdot \frac{|\F|}{|A|}.
$$
Since $C \subseteq B$, it follows by Lemma~\ref{lem:nu} that
$$
\nu(\F) \ge \frac{\delta_L(B)}{2^d} \cdot \frac{|\F|}{|B|}.
$$
Putting these together yields
\[
\tau(\F)
\le \theta_L(A) \cdot \frac{|\F|}{|A|}
=   2^d \cdot \frac{|B|}{|A|} \cdot \frac{\theta_L(A)}{\delta_L(B)}
	\cdot \frac{\delta_L(B)}{2^d} \cdot \frac{|\F|}{|B|}
\le 2^d \cdot \frac{|B|}{|A|} \cdot \frac{\theta_L(A)}{\delta_L(B)}
	\cdot \nu(\F).
\qedhere
\]
\end{proof}

We also need the following lemma which is now
folklore~\cite[Theorem~2.5 and Theorem~2.8]{BMP05}:

\begin{lemma}
For any centrally symmetric convex body $S$ in the plane,
there are two centrally symmetric convex hexagons $H$ and $H'$ such that
$H \subseteq S \subseteq H'$ and $|H|/|H'| \ge 3/4$.
\end{lemma}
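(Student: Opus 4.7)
The plan is to take $H$ to be a centrally symmetric convex hexagon of maximum area inscribed in $S$, which exists by compactness; its six vertices $v_1,\dots,v_6$ lie on $\partial S$ in cyclic order with $v_{i+3}=-v_i$. Using the shoelace representation
\[
|H|=\det(v_1,v_2)+\det(v_2,v_3)+\det(v_1,v_3),
\]
the first-order change of $|H|$ when $v_i$ moves by $\tau$ along $\partial S$ (and $v_{i+3}$ by $-\tau$, by central symmetry) equals $\det(\tau,v_{i+1}-v_{i-1})$. Local maximality therefore forces $S$ to admit a supporting line $m_i$ at $v_i$ parallel to $v_{i+1}-v_{i-1}$. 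Defining $H'$ as the intersection of the six closed half-planes bounded by $m_1,\dots,m_6$ that contain $H$ gives a centrally symmetric hexagon with $S\subseteq H'$.

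Next, setting $w_i=m_i\cap m_{i+1}$, the region $H'\setminus H$ decomposes into six triangles $T_i=\triangle v_iv_{i+1}w_i$, so the lemma reduces to $\sum_i|T_i|\le\tfrac13|H|$. Since this ratio is an affine invariant of $(v_1,v_2,v_3)$, I would normalize to $v_1=(1,0)$, $v_2=(0,1)$, $v_3=(\alpha,\beta)$. Checking the three left-turn conditions (at $v_1,v_2,v_3$; the others follow by central symmetry) yields precisely
\[
P:=1-\alpha-\beta>0,\quad Q:=1+\alpha+\beta>0,\quad R:=\beta-\alpha-1>0,
\]
and a direct computation of each $w_i$ (solving the two-line intersection) gives
\[
|T_1|=\frac{PQ}{2D},\quad |T_2|=\frac{RP}{2D},\quad |T_3|=\frac{QR}{2D},
\]
where $D:=1-\alpha+\beta$. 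A happy coincidence: $D=P+Q+R$, and the shoelace formula gives $|H|=D$ as well. Central symmetry makes $|T_{i+3}|=|T_i|$, so
\[
\frac{\sum_i|T_i|}{|H|}=\frac{PQ+QR+RP}{(P+Q+R)^2}.
\]

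The desired inequality $\sum_i|T_i|\le\tfrac13|H|$ is then equivalent to $PQ+QR+RP\le P^2+Q^2+R^2$, which follows from the identity $P^2+Q^2+R^2-PQ-QR-RP=\tfrac12\big((P-Q)^2+(Q-R)^2+(R-P)^2\big)\ge 0$, with equality iff $P=Q=R$, i.e., $(\alpha,\beta)=(-1,1)$, which is the affinely regular hexagon (realized when $S$ is a disk). Hence $|H|/|H'|\ge 3/4$. The main obstacle is the optimality argument in the first step, because $\partial S$ need not be smooth at $v_i$: the condition must be interpreted as ``$v_{i+1}-v_{i-1}$ lies in the tangent cone at $v_i$'', which is handled by a standard one-sided variation argument; degenerate cases (e.g.\ $S$ a parallelogram, where $H$ becomes $S$ itself and some $|T_i|$ vanish) are easily checked directly or by a limiting argument.
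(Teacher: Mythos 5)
Your proposal is correct, but it cannot be compared to a proof in the paper because the paper gives none: this lemma is invoked as folklore with a citation to Brass--Moser--Pach (Theorems~2.5 and~2.8 in \cite{BMP05}), so what you have written is a self-contained proof of the cited result rather than a variant of an argument in the text. Your route is the classical variational one: take a maximum-area inscribed centrally symmetric hexagon $H$, observe that optimality forces a supporting line of $S$ at each vertex $v_i$ parallel to the diagonal $v_{i+1}-v_{i-1}$ (equivalently, $v_i$ maximizes the linear functional $x\mapsto\det(x,\,v_{i+1}-v_{i-1})$ over $S$, a formulation that sidesteps smoothness of $\partial S$ more cleanly than a tangent-cone variation), and let $H'$ be the hexagon cut out by these six lines. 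I checked your computations in the normalization $v_1=(1,0)$, $v_2=(0,1)$, $v_3=(\alpha,\beta)$: the convexity conditions are exactly $P,Q,R>0$, the shoelace area is $|H|=1-\alpha+\beta=D=P+Q+R$, the cross products $\det(d_i,d_{i+1})$ all equal $\pm D$, and the intersection parameters give $|T_1|=PQ/(2D)$, $|T_2|=RP/(2D)$, $|T_3|=QR/(2D)$, so the ratio is $(PQ+QR+RP)/(P+Q+R)^2\le 1/3$ with equality at the affinely regular hexagon (consistent with the disk, where the ratio $3/4$ is attained). Two small points worth making explicit: for the final bound you only need the inclusion $H'\subseteq H\cup\bigcup_i T_i$ (so possible overlaps or a degenerate $H'$ with fewer than six sides only help, since $|H'|\le|H|+\sum_i|T_i|$ suffices); and degenerate cases such as $S$ a parallelogram, where $H=H'=S$ is a hexagon with collinear vertices, are harmless for the way the lemma is used in the paper, since $\theta_L=\delta_L=1$ holds for parallelograms as well. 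What your argument buys over the paper's citation is a short, verifiable, and constructive proof (the optimal $H$ and the supporting lines can be computed), at the cost of the routine existence/perturbation details you already flag.
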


Note that $\theta_L(H) = \delta_L(H) = 1$
for a centrally symmetric convex hexagon $H$.
Set $A = H$, $B = H'$, and $C = S$ in the previous two lemmas,
and we have,
for any family $\F$ of translates of a centrally symmetric convex body
in the plane,
$$
\tau(\F) \le 2^2 \cdot \frac43 \cdot \frac11 \cdot \nu(\F)
	= \frac{16}3 \cdot \nu(\F).
$$
This completes the proof of Theorem~\ref{thm:symm}.

\section{Upper bound by greedy decomposition and lower bound by packing and covering}

In this section we prove Theorems \ref{thm:greedyt}, \ref{thm:greedyh},
and \ref{thm:lower}.

\paragraph{Proof of Theorem~\ref{thm:greedyt}.}

Let $\F$ be a family of translates of a convex body $C$ in $\RR^d$.
Without loss of generality, assume that $\kappa((C-C)\cap L, C)$ is minimized
when $L = \{ (x_1,\ldots,x_d) \mid x_d \ge 0 \}$.
Perform a \emph{greedy decomposition} as follows.
For $i = 1,2,\ldots$,
while $\T_i = \F \setminus \bigcup_{j=1}^{i-1} \S_j$ is not empty,
let $C_i$ be the translate of $C$ in $\T_i$ that contains a point
of the largest $x_d$-coordinate,
and let $\S_i$ be the subfamily of translates in $\T_i$
that intersect $C_i$ ($\S_i$ includes $C_i$ itself).
The iterative process ends with a partition $\F = \bigcup_{i=1}^m \S_i$,
where $m \le \nu(\F)$.
We next show that $\tau(\S_i) \le \kappa((C-C)\cap L,C)$.

Choose any point in $C$ as a reference point.
We have the following lemma:

\begin{lemma}\label{lem:ccc}
Let $A$ and $B$ be two translates of $C$ with reference points $a$
and $b$, respectively.
Then,
\begin{enumerate}
\item[\textup{(i)}]
$A$ contains $b$ if and only if $-(B-b)+b$ contains $a$,
\item[\textup{(ii)}]
If $A$ intersects $B$,
then $a$ is contained in a translate of $C-C$ centered at $b$.
\end{enumerate}
\end{lemma}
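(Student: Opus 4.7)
My plan is to set up coordinates that make the algebra transparent, then verify both parts by a short direct computation. Since the reference point is an arbitrary fixed point of $C$, I will place it at the origin without loss of generality, so that every translate of $C$ has the form $C+v$ and its reference point is exactly $v$. In particular, writing $A=C+a$ and $B=C+b$, the two statements reduce to simple set-membership equivalences that can be checked by rearranging vectors.

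For part (i), I would rewrite ``$A$ contains $b$'' as $b-a\in C$, and rewrite $-(B-b)+b$ as $-(C+b-b)+b=-C+b$, so that ``$-(B-b)+b$ contains $a$'' becomes $a-b\in -C$, i.e., $b-a\in C$. The equivalence is then immediate. The geometric content is just that $-(B-b)+b$ is the reflection of $B$ through its own reference point $b$, and this reflection plays the role of a ``centered'' copy of $C$ at $b$.

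For part (ii), I would use part (i) as motivation but argue directly: if $A\cap B$ contains a point $p$, then $p-a\in C$ and $p-b\in C$, so $a-b=(p-b)-(p-a)\in C-C$, hence $a\in b+(C-C)$. Since $C-C$ is centrally symmetric about the origin, $b+(C-C)$ is precisely a translate of $C-C$ centered at $b$, giving the claim.

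There is no real obstacle here; the only thing to be careful about is the convention for the ``center'' of $C-C$. I will state explicitly that $C-C$ is symmetric about the origin, so that ``a translate of $C-C$ centered at $b$'' unambiguously means $b+(C-C)$. Everything else is one-line vector arithmetic, and both parts follow in a couple of displayed equations.
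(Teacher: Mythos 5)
Your proposal is correct and follows essentially the same route as the paper: both are one-line vector manipulations showing $b\in A\iff b-a\in A-a$ for (i) and $a-b=(p-b)-(p-a)\in C-C$ for (ii). Your normalization placing the reference point at the origin (so $A=C+a$, $B=C+b$) is a harmless simplification of the paper's use of $A-a=B-b$, and your explicit remark that $C-C$ is symmetric about the origin matches the paper's implicit convention.
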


\begin{proof}
(i)
$b \in A \iff b-a \in A-a = B-b \iff a-b \in -(B-b) \iff a \in -(B-b)+b$.
(ii)
Let $c \in A \cap B$. Then
$c \in A \Longrightarrow c-a \in A-a \Longrightarrow a-c \in -(A-a)$,
and
$c \in B \Longrightarrow c-b \in B-b = A-a$.
It follows that
$a-b = (a-c) + (c-b) \in -(A-a) + (A-a) = C-C$.
\end{proof}

By Lemma~\ref{lem:ccc} (ii),
the reference point of each translate of $C$ in $\S_i$
is contained in a translate of $C-C$
centered at the reference point of $C_i$.
Since the translate of $C-C$ is covered by
$\kappa(C-C,-C)$ translates of $-C$,
it follows by Lemma~\ref{lem:ccc} (i) that
each translate of $C$ in $\S_i$ contains one of the $\kappa(C-C,-C)$
corresponding reference points.
Therefore,
\begin{equation}\label{eq:CCC}
\tau(\S_i) \le \kappa(C-C,-C) = \kappa(C-C,C).
\end{equation}
The stronger bound $\tau(\S_i) \le \kappa((C-C)\cap L,C)$ follows
by our choice of $C_i$.
We have
$$
\tau(\F) \le \sum_{i=1}^m \tau(\S_i) \le \kappa((C-C)\cap L,C) \cdot m
	\le \kappa((C-C)\cap L,C) \cdot \nu(\F).
$$

In the special case that $C$ is a centrally symmetric convex body in the plane,
$C-C$ is a translate of $2C$.
Assume without loss of generality that $C$ is centered at the origin.
Then $C-C = 2C$.
We have the following lemma on covering $2C$ with translates of $C$;
this lemma is implicit in a result by Gr\"unbaum~\cite[Theorem~4]{Gr59},
we nevertheless present our own simple proof here for completeness:

\begin{lemma}\label{lem:c7}
Let $C$ be a centrally symmetric convex body in the plane.
Then $2C$ can be covered by seven translates of $C$,
including one translate concentric with $2C$ and six others centered
at the six vertices, respectively,
of an affinely regular hexagon $H_C$ concentric with $2C$.
\end{lemma}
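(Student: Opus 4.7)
I will normalize so that $C$ is centered at the origin; then $C=-C$. I invoke the classical existence of an affinely regular hexagon $H$ inscribed in $C$, concentric with $C$, with vertices $u_1,\dots,u_6$ on $\partial C$ satisfying the affine-regularity identity $u_{i-1}+u_{i+1}=u_i$ and the central-symmetry identity $u_{i+3}=-u_i$ (indices mod~$6$); moreover, the construction can be arranged so that $C\subseteq 2H$. I then define the six outer centers by $v_i:=u_i+u_{i+1}$ for $i=1,\dots,6$ and let $H_C$ be the hexagon with these vertices. Using $u_{i-1}+u_{i+1}=u_i$, a short computation gives $v_{i-1}+v_{i+1}=v_i$, so $H_C$ is affinely regular, and $u_{i+3}=-u_i$ gives $v_{i+3}=-v_i$, placing the center of $H_C$ at the origin as required.

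The core claim is $2C \subseteq C \cup \bigcup_{i=1}^6 (C+v_i)$. I split $2C$ into the hexagon $2H$ and six caps of $2C$ lying beyond the edges of $2H$. The hexagon $2H$ is further cut into six triangles $T_i=\mathrm{conv}(0,2u_i,2u_{i+1})$. Each $T_i$ decomposes into an inner triangle $\mathrm{conv}(0,u_i,u_{i+1})\subseteq H\subseteq C$, covered by the central translate, and an outer trapezoid $\mathrm{conv}(u_i,u_{i+1},2u_{i+1},2u_i)$. I verify that each of the four vertices of this trapezoid translates by $-v_i$ into $C$: $u_i-v_i=-u_{i+1}\in C$, $u_{i+1}-v_i=-u_i\in C$, $2u_i-v_i=u_i-u_{i+1}=u_{i-1}\in C$, and $2u_{i+1}-v_i=u_{i+2}\in C$ (the last two using affine regularity). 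Convexity of $C+v_i$ then forces the entire trapezoid into $C+v_i$, so $T_i\subseteq C\cup(C+v_i)$.

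The main obstacle is showing that each of the six caps of $2C$ beyond an edge of $2H$ is also covered by the corresponding translate $C+v_i$. The plan is to exploit the containment $C\subseteq 2H$: for an outward normal $n$ to the chord $u_iu_{i+1}$ of $H$, the support value $h_C(n)$ is bounded by $2h_H(n)$, and since $v_i$ lies at the midpoint of the chord $2u_i\,2u_{i+1}$ of $2C$, one has $n\cdot v_i=2h_H(n)$ when $n$ is perpendicular to this chord. Thus $C+v_i$ reaches in direction $n$ up to $n\cdot v_i + h_C(n)\ge 2h_C(n)$, matching the outer reach of $2C$. A sector-by-sector analysis, using affine-regularity of $H$ to bound the range of normal directions at points of the cap and central symmetry of $C$ to control the opposite face, then shows each cap lies in $C+v_i$, completing the covering argument.
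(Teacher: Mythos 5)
Your decomposition of $2H$ is fine: the centers $v_i=u_i+u_{i+1}$ are exactly the midpoints of the sides of $2H$ used in the paper, and the trapezoid computation correctly gives $2H\subseteq C\cup\bigcup_i(C+v_i)$. But the proof stops exactly where the lemma becomes nontrivial: the containment of the six caps of $2C$ in the translates $C+v_i$ is only announced (``the plan is\dots'', ``a sector-by-sector analysis\dots then shows''), not proved. The support-function computation you do supply shows only that $h_{C+v_i}(n)\ge h_{2C}(n)$ for the single outward normal $n$ of the chord, which is necessary but far from sufficient for a set containment. Moreover, the hypothesis you propose to base the analysis on, $H\subseteq C\subseteq 2H$ together with central symmetry, is genuinely too weak: normalize $H$ to be regular and write points in the basis $u_1,u_2$, and take $C=\mathrm{conv}\bigl(H\cup\{\pm q\}\bigr)$ with $q=1.8\,u_1-0.5\,u_2$. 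Then $H\subseteq C\subseteq 2H$ and $C=-C$, yet $2q$ lies in the cap of $2C$ beyond the edge through $2u_1,2u_2$ while $2q-v_1=2.6\,u_1-2u_2\notin 2H\supseteq C$. (Of course this $C$ no longer has $u_1$ on its boundary --- which is precisely the point: the property you must exploit is that the vertices $u_i$ lie on $\partial C$, not merely that $C\subseteq 2H$.)

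What actually closes the gap is a localization of the cap that follows from the vertices being on $\partial C$: if $c\in C$ lies beyond the chord $[u_1,u_2]$ and has $u_1$-coordinate (resp.\ $u_2$-coordinate) exceeding $1$ in the basis $u_1,u_2$, then $u_1\in\mathrm{int}\,\mathrm{conv}(u_6,u_2,c)\subseteq\mathrm{int}\,C$ (resp.\ $u_2\in\mathrm{int}\,\mathrm{conv}(u_3,u_1,c)$), contradicting $u_1,u_2\in\partial C$; hence the cap of $C$ beyond $[u_1,u_2]$ is contained in the corner triangle $\mathrm{conv}(u_1,u_2,u_1+u_2)$. Given that, writing $c=x\,u_1+y\,u_2$ with $x\le 1$, $y\le 1$, $x+y\ge 1$, one checks directly that $2c-v_1=a\,u_3+b\,(-u_3)+e\,c$ with $a=\tfrac{1-x}{x+y}$, $b=\tfrac{1-y}{x+y}$, $e=2-\tfrac{2}{x+y}$, all nonnegative and summing to $1$, so $2c-v_1\in C$; this is exactly the cap containment you need (and the same cone argument proves the assertion $C\subseteq 2H$ that you invoke without proof). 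Until an argument of this kind is written out, the proposal is incomplete at its central step. For comparison, the paper avoids any cap analysis: it builds the hexagon with a pair of vertices on a prescribed diameter and shows, via a ratio-$2$ homothety centered at the intersection of two extended sides, that each side-translate covers the whole portion of $2C$ in the corresponding sector not already covered by the central copy.
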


\begin{figure}[htbp]
\psfrag{o}{$o$}
\psfrag{o'}{$o'$}
\psfrag{p1}{$p_1$}
\psfrag{p2}{$p_2$}
\psfrag{p3}{$p_3$}
\psfrag{p4}{$p_4$}
\psfrag{p5}{$p_5$}
\psfrag{p6}{$p_6$}
\psfrag{q1}{$q_1$}
\psfrag{q6}{$q_6$}
\centering\includegraphics{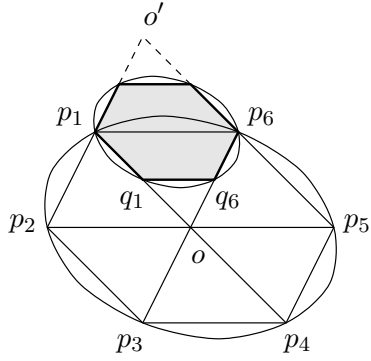}
\caption{\small
Covering $2C$ with seven translates of $C$.
$2H=p_1p_2p_3p_4p_5p_6$ is an affinely regular hexagon inscribed in $2C$;
$o$ is the center of $2C$;
$o'$ is the intersection of the two lines extending $p_2p_1$ and $p_5p_6$;
$q_1$ and $q_6$ are the midpoints of $p_1 o$ and $p_6 o$, respectively.}
\label{fig:c7}
\end{figure}

\begin{proof}
Refer to Figure~\ref{fig:c7}.
Let the center $o$ of $C$ be the origin.
Let $p_2$ and $p_5$ be the intersections of the boundary of $2C$
and an arbitrary line $\ell$ through the origin.
Choose two points $p_1$ and $p_6$ on the boundary of $2C$
on one side of the line $\ell$,
and choose two points $p_3$ and $p_4$ on the other side,
such that
$\overrightarrow{p_1p_6} = \overrightarrow{p_3p_4}
	= \frac12\, \overrightarrow{p_2p_5}$.
Then $p_1p_2p_3p_4p_5p_6$ is an affinely regular hexagon.
Let $2H$ be this hexagon inscribed in $2C$.
Consider the (shaded) hexagon $H'$ that is a translate of $H$
with two opposite vertices $p_1$ and $p_6$.
Let $q_1$ and $q_6$ be the midpoints of $p_1 o$ and $p_6 o$, respectively.
Then $q_1$ and $q_6$ are also vertices of $H'$.
The two hexagons $2H$ and $H'$ are homothetic with ratio 2
and with homothety center at the intersection $o'$
of the two lines extending $p_2p_1$ and $p_5p_6$.
Let $C'$ be a translate of $C$ such that $H'$ is inscribed in $C'$.
Then $C'$ covers the part of $2C$ between the two rays
$\overrightarrow{op_1}$
and
$\overrightarrow{op_6}$.
It follows that $2C$ is covered by seven translates of $C$,
one centered at the origin,
and six others centered at the midpoints of the six sides of $2H$, respectively.
The six midpoints are clearly the vertices of another (smaller)
affinely regular hexagon concentric with $2C$.
Let $H_C$ be this hexagon, and the proof is complete.
\end{proof}

Choose the halfplane $L$ through the center of $2C$ and any two opposite
vertices of the hexagon $2H=p_1p_2p_3p_4p_5p_6$ in Lemma~\ref{lem:c7}.
Then $\kappa((C-C)\cap L,C) \le 4$.
It follows that $\tau(\F) \le 4 \cdot \nu(\F)$
for any family $\F$ of translates of a centrally symmetric convex body
in the plane.

\begin{figure}[htbp]
\centering
\resizebox{0.67\linewidth}{!}{%
\parbox{3in}{\centering
\includegraphics{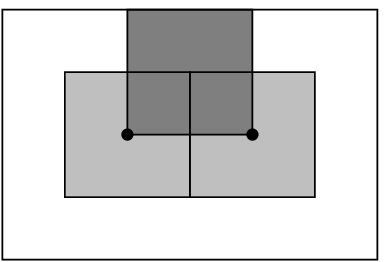}\\(a)\bigskip\\
\includegraphics{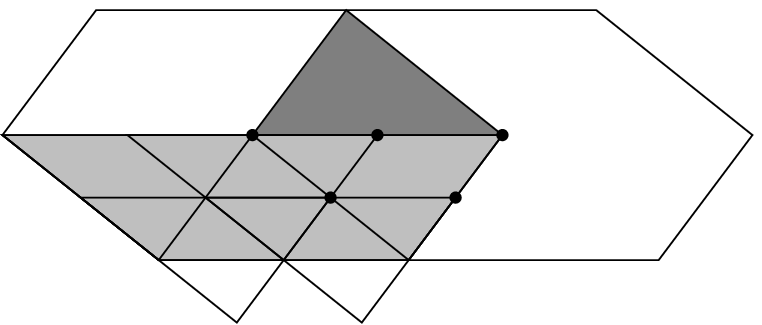}\\(b)}
\parbox{3in}{\centering
\includegraphics{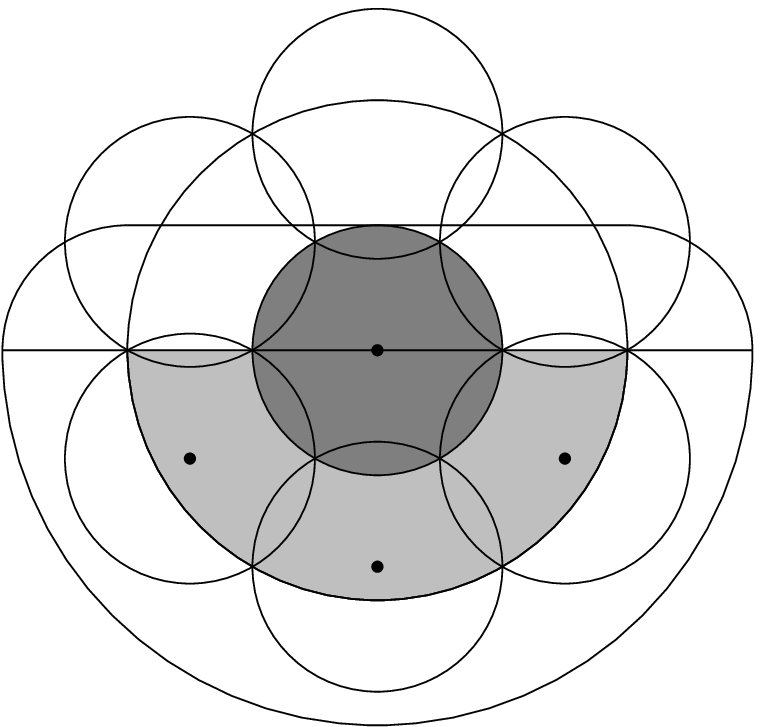}\\(c)}}
\caption{\small
Piercing a subfamily $\S_i$ of translates that intersect
the highest translate $C_i$ (dark-shaded).
(a)
The centers of the squares
are contained in the light-shaded rectangle;
the squares can be pierced by two points.
(b)
The lower-left vertices of the triangles
are contained in the light-shaded trapezoid;
the triangles can be pierced by five points.
(c)
The centers of the disks
are contained in the light-shaded half-disk;
the disks can be pierced by four points.}
\label{fig:plane}
\end{figure}

To complete the proof of Theorem~\ref{thm:greedyt},
we apply the greedy decomposition algorithm
to some simple types of convex bodies in the plane:
squares, triangles, and disks.
We use some known bounds on $\tau(\F)$
for families $\F$ with small $\nu(\F)$,
for example, $\alpha_1(C)$ for $\nu(\F) = 1$,
to obtain slightly better upper bounds for these special cases.
We refer to Figure~\ref{fig:plane},
where the $x_1$ and $x_2$ axes are the $x$ and $y$ axes.

First let $C$ be a square, and refer to Figure~\ref{fig:plane} (a).
Corollary~\ref{cor:parallelepipedt} implies that $\tau(\F) \le 2 \cdot \nu(\F)$
for any family $\F$ of translates of $C$.
We obtain a slightly better bound by a tighter analysis
of the greedy decomposition algorithm.
Assume that $C$ is axis-parallel and has side length $1$.
Choose the center of $C$ as the reference point.
Then the centers of the squares in $\S_i$
are contained in the light-shaded rectangle of width $2$ and height $1$,
which is covered by two unit squares centered at the two lower vertices of
$C_i$.
Each square in $\S_i$ contains one of the two lower vertices of $C_i$,
thus $\tau(\S_i) \le 2$.
Consider two cases:

\begin{enumerate}
\item $m \le \nu(\F) - 1$. Then
$$
\tau(\F) \le \sum_{i=1}^m \tau(\S_i)
	\le 2 \cdot(\nu(\F) - 1)
	= 2 \cdot\nu(\F) - 2.
$$
\item
$m = \nu(\F)$.
Then $\nu(\S_m) = 1$.
It follows that $\tau(\S_m) \le \alpha_1(C) = 1$~\cite{Gr59}.  Then
$$
\tau(\F) \le \sum_{i=1}^m \tau(\S_i)
	\le 2 \cdot(\nu(\F) - 1) + 1
	= 2 \cdot\nu(\F) - 1.
$$
\end{enumerate}

Next let $C$ be a triangle, and refer to Figure~\ref{fig:plane} (b).
Assume that $C$ has a horizontal lower side.
Choose the lower-left vertex of $C$ as the reference point.
The lower-left vertices of the triangles in $\S_i$
are contained in the light-shaded trapezoid,
which can be covered by five translates of $-C$.
Hence each triangle in $\S_i$ contains one of the upper-right vertices
of these five translates, thus $\tau(\S_i) \le 5$.
The proof can be finished in the same way as for squares
by considering the two cases $m \le \nu(\F) - 1$ and $m = \nu(\F)$,
and using the fact that $\alpha_1(C) = 3$ for any triangle $C$~\cite{CS67}.

Finally let $C$ be a disk, and refer to Figure~\ref{fig:plane} (c).
Assume that $C$ has radius $1$. Choose the center of $C$ as the reference point.
Then the centers of the disks in $\S_i$
are contained in the light-shaded half-disk of radius $2$.
It is well known (see~\cite{Fr}) that
a disk of radius $2$ can be covered by seven disks of radius $1$,
with one disk in the middle
and six others around in a hexagonal formation.
Therefore the half-disk of radius $2$
can be covered by four disks of radius $1$.
The center of each disk in $\S_i$ is contained by one of the four disks;
by symmetry, each disk in $\S_i$ contains the center of one of the four disks,
thus $\tau(\S_i) \le 4$.
Again, the proof can be finished by considering the two cases $m \le
\nu(\F) - 1$ and $m = \nu(\F)$ as done for squares and triangles,
and using the fact that $\alpha_1(C) = 3$ for any disk $C$~\cite{Gr59}.
Indeed the same argument shows that
$\tau(\F) \le 4\cdot (\nu(\F) - 1) + 3 = 4\cdot \nu(\F) - 1$
for any centrally symmetric convex body $C$ in the plane
since $\alpha_1(C) \le 3$ also holds~\cite{Ka00}.
This completes the proof of Theorem~\ref{thm:greedyt}.

\paragraph{Proof of Theorem~\ref{thm:greedyh}.}

Let $\F$ be a family of homothets of a convex body $C$ in $\RR^d$. 
We again use greedy decomposition.
The only difference in the algorithm is that $C_i$ is now chosen as
the smallest homothet of $C$ in $\T_i$.
By our choice of $C_i$,
each homothet in $\S_i$ contains a translate of $C_i$ that intersects $C_i$.
Hence the bound $\tau(\S_i) \le \kappa(C-C,C)$ follows in a similar way
as the derivation of \eqref{eq:CCC}.

Let now $C$ be a centrally symmetric convex body in the plane.
By Lemma~\ref{lem:c7},
we have $\kappa(C-C,C) \le 7$.
Then $\tau(\S_i) \le \kappa(C-C,C) \leq 7$,
from which it follows that  $\tau(\F) \le 7 \cdot \nu(\F)$
for any centrally symmetric convex body $C$ in the plane. 

The analysis for special types of convex bodies $C$ in the plane
(squares, triangles, and disks)
is also similar to the corresponding analysis
in the proof of Theorem~\ref{thm:greedyt}.
We obtain the bound $\tau(\S_i) \le \kappa(C-C,C)$
and show that
$\kappa(C-C,C) \le 4$ for any square $C$,
$\kappa(C-C,C) \le 12$ for any triangle $C$,
and $\kappa(C-C,C) \le 7$ for any disk $C$,
then use $\beta_1(C)$ instead of $\alpha_1(C)$ to bound $\tau(\S_m)$ in case~2.
As discussed in the introduction, it is known that
$\beta_1(C) = 1$ for any square $C$~\cite{Gr59},
$\beta_1(C) = 3$ for any triangle $C$~\cite{CS67},
and $\beta_1(C) = 4$ for any disk $C$~\cite{Gr59,Da86}.
This completes the proof of Theorem~\ref{thm:greedyh}.

\paragraph{Proof of Theorem~\ref{thm:lower}.}

Let $C$ be a convex body in $\RR^d$ and $n$ be a positive integer.
We will show that $\beta(C) \ge \alpha(C) \ge \theta_T(C) / \delta_T(C)$
by constructing a family $\F_n$ of $n^{2d}$ translates of $C$, such that
\begin{equation}\label{eq:limit}
\lim_{n\to \infty} \frac{\tau(\F_n)}{\nu(\F_n)} \ge 
\frac{\theta_T(C)}{\delta_T(C)}.
\end{equation}

By Lemma~\ref{lem:cs},
there exist two homothetic parallelepipeds $P$ and $Q$ with ratio $d$
such that $P \subseteq C \subseteq Q$.
Without loss of generality (via an affine transformation),
we can assume that $P$ and $Q$ are axis-parallel hypercubes of side lengths
$1$ and $d$, respectively,
and that $P$ is centered at the origin.
Now choose the origin as the reference point of $C$.
Let $\F_n = \{ C + t \mid t \in T_n \}$
be a family of translates of $C$
corresponding to a set of $n^{2d}$ regularly placed reference points
$$
T_n = \{ (t_1/n, \ldots, t_d/n) \mid
	(t_1, \ldots, t_d) \in \ZZ^d, 1 \le t_1,\ldots,t_d \le n^2 \}.
$$
Denote by $H(\ell)$ any axis-parallel hypercube of side length $\ell$.

We first obtain an upper bound on $\nu(\F_n)$.
For each $C + t \in \F_n$,
we have $C + t \subseteq C + T_n \subseteq Q + T_n$.
Note that $Q + T_n$ is an axis-parallel hypercube
of side length exactly $n - \frac1n + d$.
Denote by $\delta_T(X, Y)$ the supremum of the packing density
of a domain $Y \subseteq \RR^d$ by translates of $X$.
By a volume argument, we have
\begin{equation}\label{eq:nuFn}
\nu(\F_n)
	\le \frac{\delta_T(C, Q + T_n) \cdot |Q + T_n|}{|C|}
	= \frac{\delta_T\big( C, H(n - \frac1n + d) \big)
		\cdot (n - \frac1n + d)^d}{|C|}.
\end{equation}

We next obtain a lower bound on $\tau(\F_n)$.
By Lemma~\ref{lem:ccc} (i),
piercing the family  $\F_n$ of translates of $C$ is equivalent to
covering the corresponding set $T_n$ of reference points by translates of $-C$.
Let $S_n$ be any set of points such that $T_n \subseteq -C + S_n$,
that is, $T_n$ is covered by
the set $\{ -C + s \mid s \in S_n \}$ of translates of $-C$.
We also have $-\frac1n P \subseteq -\frac1n C$ since $P \subseteq C$.
It follows that
$$
\textstyle
-\frac1n P + T_n \subseteq -\frac1n C + (-C + S_n) = {-(1 + \frac1n)C} + S_n.
$$
Thus $\tau(\F_n)$ is at least the minimum number of translates of
${-(1 + \frac1n)C}$
that cover
$-\frac1n P + T_n$.
Note that
$-\frac1n P + T_n$
is an axis-parallel hypercube of side length exactly $n$.
Denote by $\theta_T(X, Y)$ the infimum of the covering density
of a domain $Y \subseteq \RR^d$ by translates of $X$.
Again by a volume argument, we have
\begin{equation}\label{eq:tauFn}
\tau(\F_n)
	\ge \frac{\theta_T\big( {-(1 + \frac1n)C}, {-\frac1n P} +  T_n \big)
		\cdot |{-\frac1n P} + T_n|}{|{-(1 + \frac1n)C}|}
	= \frac{\theta_T\big( (1 + \frac1n)C, H(n) \big)
		\cdot n^d}{(1 + \frac1n)^d \cdot |C|}.
\end{equation}

From the two inequalities \eqref{eq:nuFn} and \eqref{eq:tauFn},
it follows that
$$
\frac{\tau(\F_n)}{\nu(\F_n)}
	\ge \frac{\theta_T\big( (1 + \frac1n)C, H(n) \big)}
			{\delta_T\big( C, H(n - \frac1n + d) \big)}
		\cdot \frac{1}{(1 + \frac1n)^d (1 - \frac1{n^2} + \frac{d}n)^d}.
$$
Taking the limit as $n \to \infty$, we have
$\theta_T\big( (1 + \frac1n)C, H(n) \big) \to \theta_T(C)$,
$\delta_T\big( C, H(n - \frac1n + d) \big) \to \delta_T(C)$,
and $(1 + \frac1n)^d (1 - \frac1{n^2} + \frac{d}n)^d \to 1$.
This yields \eqref{eq:limit} as desired.

We now consider the special case that
$C$ is the $d$-dimensional unit ball $B^d$ in $\RR^d$.
We clearly have $\theta_T(B^d) \ge 1$.
Kabatjanski\u{\i} and Leven\v{s}te\u{\i}n~\cite{KL78}
showed that $\delta_T(B^d) = \delta(B^d) \le 2^{-(0.599 \pm o(1))d}$
as $d \to \infty$;
see also~\cite[p.~50]{BMP05}.
Therefore we have
$$
\beta(B^d) \ge \alpha(B^d) \ge \frac{\theta_T(B^d)}{\delta_T(B^d)}
	\ge 2^{(0.599 \pm o(1))d}
\textrm{ as } d\to\infty.
$$
This completes the proof of Theorem~\ref{thm:lower}.

\section{Upper bound for translates of a centrally symmetric convex hexagon}

In this section we prove Theorem~\ref{thm:hexagon}.
Let $\F$ be a family of translates of a centrally symmetric convex hexagon $H$
in the plane.

\begin{figure}[htbp]
\centering
\psfrag{p1}{$p_1$}
\psfrag{p2}{$p_2$}
\psfrag{p3}{$p_3$}
\psfrag{p4}{$p_4$}
\psfrag{p5}{$p_5$}
\psfrag{p6}{$p_6$}
\psfrag{p1'}{$p_1'$}
\psfrag{p4'}{$p_4'$}
\psfrag{u}{$u$}
\psfrag{v}{$v$}
\psfrag{u'}{$u'$}
\psfrag{v'}{$v'$}
\includegraphics{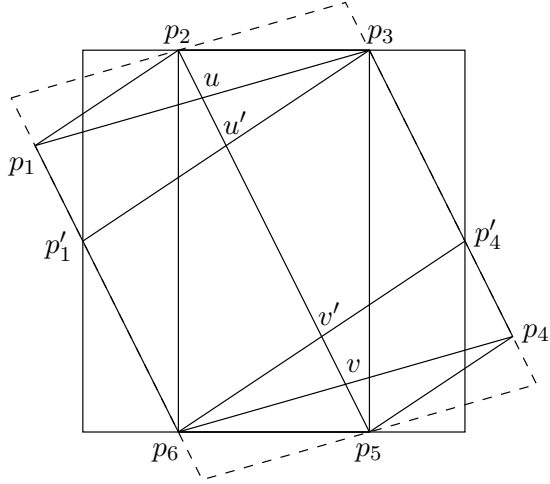}
\caption{\small A centrally symmetric convex hexagon $H = p_1p_2p_3p_4p_5p_6$.}
\label{fig:h3}
\end{figure}

We refer to Figure~\ref{fig:h3} for the general case $\nu(\F) \ge 1$.
We will prove that $\tau(\F) \le 3\cdot \nu(\F)$.
By Theorem~\ref{thm:conv},
it suffices to show that $\gamma(H) \le 3$.
We will show that $\gamma(H) \le 3$ by finding two parallelograms
$P$ and $Q$ that are parallel to each other, with length ratios
$w = \lambda_1(P,Q) \le 2$ and $h = \lambda_2(P,Q) = 1$,
such that $P \subseteq H \subseteq Q$.
Let $H = p_1p_2p_3p_4p_5p_6$.
Without loss of generality (via an affine transformation),
the parallelogram $p_2p_3p_5p_6$
is an axis-parallel rectangle of width $1/2$ and height $1$.
If the hexagon is contained in an axis-parallel unit square,
then we can choose $P$ and $Q$ as the rectangle $p_2p_3p_5p_6$ and the square,
whose length ratios are $w = 2$ and $h = 1$.
Suppose otherwise.
Assume that $p_1$ is higher than $p_4$.
Then we choose $P$ as the parallelogram $p_1p_3p_4p_6$ and $Q$ as the
(dashed) parallelogram circumscribing $H$ and parallel to $P$.
Let $u$ be the intersection of $p_1p_3$ and $p_2p_5$,
and let $v$ be the intersection of $p_4p_6$ and $p_2p_5$.
The length ratios of $P$ and $Q$ are $w = |p_2p_5|/|uv|$ and $h = 1$,
where $w$ is maximized to $2$ when $p_1$ and $p_4$ are the midpoints
of the two vertical sides of the unit square.

\begin{figure}[htbp]
\centering
\psfrag{S1}{$S_1$}
\psfrag{S2}{$S_2$}
\psfrag{S3}{$S_3$}
\psfrag{S1'}{$S_1'$}
\psfrag{S2'}{$S_2'$}
\psfrag{S3'}{$S_3'$}
\psfrag{1}{$1$}
\psfrag{2}{$2$}
\psfrag{3}{$3$}
\psfrag{H12}{$H_{12}$}
\psfrag{H13}{$H_{13}$}
\psfrag{H23}{$H_{23}$}
\psfrag{H}{$H$}
\psfrag{H'}{$H'$}
\resizebox{\linewidth}{!}{\includegraphics{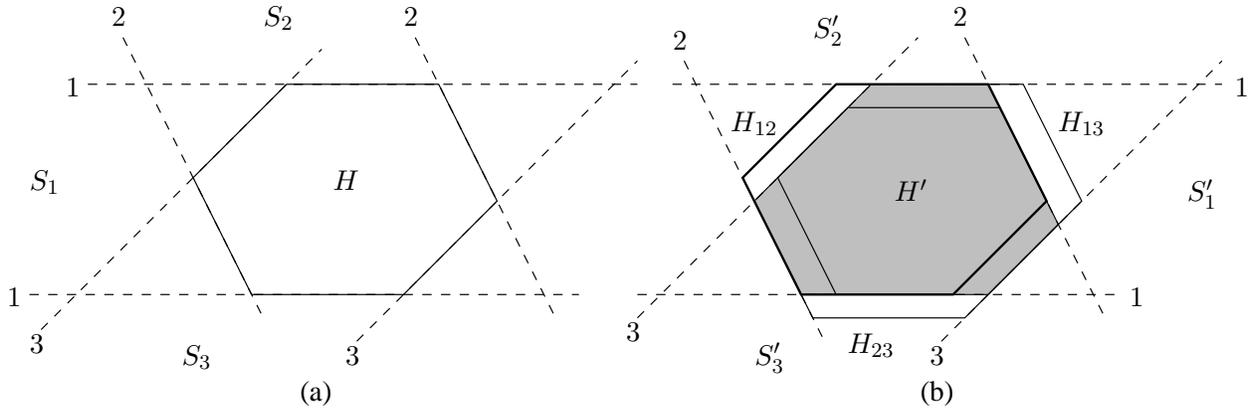}}
\\\hspace{\stretch1}(a)\hspace{\stretch2}(b)\hspace{\stretch1}
\caption{\small (a) A centrally symmetric convex hexagon $H$
is the intersection of three strips $S_1$, $S_2$, and $S_3$.
(b) The centers of all translates of $H$ in $\F$
are contained in the shaded hexagon $H'$ that is the intersection of
three strips $S_1'$, $S_2'$, and $S_3'$;
the shaded hexagon $H'$ is covered by any two of the three translates of $H$:
$H_{12} \subseteq S_1' \cap S_2'$,
$H_{13} \subseteq S_1' \cap S_3'$,
and $H_{23} \subseteq S_2' \cap S_3'$.
$H_{12}$ is shown in bold lines.}
\label{fig:h2}
\end{figure}

We refer to Figure~\ref{fig:h2} for the special case $\nu(\F) = 1$.
We will prove that $\tau(\F) \le 2$.
The centrally symmetric convex hexagon $H$ is the intersection of three strips
$S_1$, $S_2$, and $S_3$,
each bounded by the two supporting lines of a pair of parallel edges of $H$.
Without loss of generality, assume that the strip $S_1$ is horizontal.
Let $A$ be the highest translate of $H$ in $\F$,
and let $B$ be any other translate of $H$ in $\F$.
Then the $y$-coordinates of the centers of $A$ and $B$
differ by at most the width of the strip $S_1$.
This implies that
the centers of all translates of $H$ in $\F$ are contained
in a translate of $S_1$.
Apply the same argument to the other two strips $S_2$ and $S_3$.
It follows that the centers of all translates of $H$ in $\F$ are contained
a hexagon $H'$ that is the intersection of three strips
$S_1'$, $S_2'$, and $S_3'$, which are translates of $S_1$, $S_2$, and $S_3$,
respectively.
Let $H_{12}$, $H_{13}$, and $H_{23}$ be the three unique translates of $H$
contained in $S_1' \cap S_2'$, $S_1' \cap S_3'$, and $S_2' \cap S_3'$,
respectively.
Then any two of the three translates of $H$, say $H_{12}$ and $H_{13}$,
cover the hexagon $H'$.
It follows by symmetry that two points
(the centers of $H_{12}$ and $H_{13}$)
are enough to pierce all members of $\F$.
This completes the proof of Theorem~\ref{thm:hexagon}.

\section{Conclusion}

We believe that our bounds in Lemma~\ref{lem:tau} and Lemma~\ref{lem:nu}
are not tight. We have the following conjectures:

\begin{conjecture}\label{conjtau}
Let $\F$ be a family of translates of a centrally symmetric convex body $S$
in the plane. Then $\tau(\F) \le |\F|/|S|$.
\end{conjecture}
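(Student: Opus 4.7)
The plan is to strengthen Lemma~\ref{lem:tau} in the planar case by replacing the global lattice covering (which forces the factor $\theta_L(S)\ge 1$, strict whenever $S$ is not a parallelogram or centrally symmetric hexagon) with an iterative piercing argument that charges at least $|S|$ of the area $|\F|$ to each chosen point. The reduction is clean: granting the key ``consumption lemma'' below, induction on the real value $|\F|/|S|$ gives the bound, with the base case $|\F|=|S|$ (which forces all members of $\F$ to coincide, so $\tau(\F)=1$) and the inductive step obtained by piercing at the promised point $p$ and recursing on $\F\setminus\F_p$, whose union has area at most $|\F|-|S|$.

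The key lemma I would aim to prove is: for every family $\F$ of translates of $S$ with $|\F|>|S|$, there exists a point $p\in\RR^2$ such that, setting $\F_p=\{T\in\F:p\in T\}$, the exclusive area
\[
\bigl|\textstyle\bigcup\F_p\setminus\bigcup(\F\setminus\F_p)\bigr|
\]
is at least $|S|$. Using Lemma~\ref{lem:ccc}(i) and central symmetry, the choice of $p$ is dual to the choice of a translate $S+p$, with the ``captured'' subfamily $\F_p$ consisting of those $T\in\F$ whose centers lie in $S+p$. Thus the lemma becomes: find a translate of $S$ whose captured subfamily contributes at least $|S|$ of exclusive area to $\bigcup\F$. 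A natural starting point is the integral identity $\int_{\RR^2}|(S+p)\cap\bigcup\F|\,dp=|S|\cdot|\F|$, which gives an average \emph{total} coverage of $|S|$ per piercing point.

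The main obstacle is that this identity only controls the non-exclusive area captured by $\F_p$: an adversarial configuration could place heavily overlapping members near every ``good'' $p$, keeping the exclusive part of $\bigcup\F_p$ well below $|S|$. Bridging this gap is exactly the content that separates the conjecture from Lemma~\ref{lem:tau}. One promising direction is to combine the averaging with a sweep selection in the spirit of Theorem~\ref{thm:greedyt}: pick $p$ inside the ``topmost'' member $T^\ast$ along some direction, so that the upper portion of $T^\ast$ is automatically exclusive, and then use convexity together with central symmetry to certify that enough of the remaining captured area is also exclusive. Whether this yields exactly $|S|$ of exclusive area, or whether a more delicate charging scheme based on the intersection pattern $\bigcap_{T\ni q}T$ is required, is the principal difficulty and the reason the statement remains open.
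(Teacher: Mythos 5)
This statement is one of the paper's \emph{conjectures}, not a theorem: the authors offer no proof of it, observing only that it holds in the special cases where $\theta_L(S)=1$ (parallelograms and centrally symmetric hexagons), since then Lemma~\ref{lem:tau} already gives $\tau(\F)\le\theta_L(S)\cdot|\F|/|S|=|\F|/|S|$. So there is no paper proof to compare against, and your proposal does not close the gap either. The entire argument rests on the unproven ``consumption lemma'' (existence of a point $p$ whose captured subfamily $\F_p$ covers at least $|S|$ of area exclusively), and you yourself concede that this is exactly the missing content. Note moreover that this lemma is formally \emph{stronger} than the conjecture: iterating it yields $\tau(\F)\le|\F|/|S|$, but nothing suggests the converse, so the reduction does not lower the difficulty, and you give no evidence ruling out an adversarial configuration in which every candidate $p$ captures only heavily shared area. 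The averaging identity $\int_{\RR^2}|(S+p)\cap\bigcup\F|\,dp=|S|\cdot|\F|$ is correct but, as you note, blind to the exclusive/shared distinction, and the topmost-member selection in the spirit of Theorem~\ref{thm:greedyt} only certifies a small exclusive cap near the top of $T^\ast$, with no mechanism forcing it (plus the rest of the captured region) up to $|S|$.

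The surrounding bookkeeping is essentially fine (the base case is correct since two distinct translates of a convex body have union of area strictly greater than $|S|$, though you should induct on the cardinality of $\F$ rather than on the real parameter $|\F|/|S|$, and the duality $p\in S+c\iff c\in S+p$ via central symmetry is used correctly). But as it stands the proposal is an outline that transfers the open problem into an unproven, possibly harder, claim; it neither proves the conjecture nor should be represented as doing so.
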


\begin{conjecture}\label{conjnu}
Let $\F$ be a family of translates of a centrally symmetric convex body $S$
in the plane. Then $\nu(\F) \ge \frac14 \cdot |\F|/|S|$.
\end{conjecture}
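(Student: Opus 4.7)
The plan is to strengthen the proof of Lemma~\ref{lem:nu}, which for $d=2$ already yields $\nu(\F) \ge (\delta_L(S)/4)\cdot|\F|/|S|$. This is the conjecture exactly when $\delta_L(S) = 1$, i.e.\ when $S$ is a parallelogram or a centrally symmetric hexagon; the task is therefore to recover the missing factor $\delta_L(S)$ for a general centrally symmetric body.

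My first step is to sandwich $S$ using the lemma cited at the end of Section~\ref{sec:symm}: there exist centrally symmetric convex hexagons $H$ and $H'$ with $H \subseteq S \subseteq H'$ and $|H|/|H'| \ge 3/4$. Let $\F'$ denote the family obtained from $\F$ by replacing each translate $S+a$ by the corresponding $H'+a$. Since $S \subseteq H'$, the intersection graph of $\F$ is a subgraph of that of $\F'$, and therefore $\nu(\F) \ge \nu(\F')$. Since $H'$ is a centrally symmetric hexagon, $\delta_L(H') = 1$, so Lemma~\ref{lem:nu} applied to $\F'$ gives $\nu(\F') \ge \tfrac{1}{4}\cdot|\F'|/|H'|$. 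Combining this with the trivial bounds $|\F'| \ge |\F|$ and $|H'| \le \tfrac{4}{3}|S|$ (from $|H|/|H'| \ge 3/4$ together with $|H|\le|S|$) produces the weaker inequality $\nu(\F) \ge \tfrac{3}{16}\cdot|\F|/|S|$.

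Closing the gap between $\tfrac{3}{16}$ and the conjectured $\tfrac{1}{4}$ is the main obstacle. The ratio $|H'|/|H| = 4/3$ in the sandwich lemma is essentially tight (it is attained in the limit by the disk), so a better ``global'' sandwich will not suffice on its own. A more promising refinement is to exploit a correlation between the two losses in the chain: the enlargement $H'\setminus S$ added around each translate overlaps with the enlargements of neighboring translates precisely when the $\F$-union is already dense---which is the regime in which a stronger packing bound is most needed. Making this quantitative, for instance by re-running the pigeonhole step in Lemma~\ref{lem:nu} against a weighted union in which each point is counted with multiplicity coming from the original $\F$-cover rather than from the $\F'$-cover, seems to be the critical technical step.

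An alternative route is to first attack Conjecture~\ref{conjtau} and then transfer the resulting bound to $\nu(\F)$; however, the known inequality $\tau(\F) \le 4\nu(\F)-1$ from Theorem~\ref{thm:greedyt} points in the wrong direction for deducing Conjecture~\ref{conjnu} from Conjecture~\ref{conjtau}, so a new geometric input is required here as well. In any case, I expect that proving the sharp constant $\tfrac{1}{4}$ will require a specifically two-dimensional argument, using the fact that every centrally symmetric convex body in the plane is sandwiched between two hexagons up to a fixed affine distortion, rather than the lattice-packing-density machinery alone.
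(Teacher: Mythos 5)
You should first note that the statement you were asked to prove is stated in the paper as an open conjecture (Conjecture~\ref{conjnu}); the paper itself gives no proof, only the observation that it holds when $\delta_L(S)=1$ (parallelograms and centrally symmetric hexagons), since Lemma~\ref{lem:nu} with $d=2$ gives $\nu(\F)\ge\frac{\delta_L(S)}{4}\cdot|\F|/|S|$. Your partial argument is correct as far as it goes: replacing each translate of $S$ by the corresponding translate of the circumscribed hexagon $H'$ preserves edges of the intersection graph, so $\nu(\F)\ge\nu(\F')$, and Lemma~\ref{lem:nu} applied to $\F'$ with $\delta_L(H')=1$, together with $|\F'|\ge|\F|$ and $|H'|\le\frac43|S|$, yields $\nu(\F)\ge\frac{3}{16}\cdot|\F|/|S|$. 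But this buys nothing beyond what the paper already has: the same constant $\frac{3}{16}$ follows directly from Lemma~\ref{lem:nu} because the circumscribed hexagon gives $\delta_L(S)\ge|S|/|H'|\ge\frac34$ for every centrally symmetric planar body. So your route is a repackaging of the known bound, not progress toward the conjectured constant.

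The genuine gap is precisely the passage from $\frac{3}{16}$ to $\frac14$, and your proposal does not contain an argument for it. The ``correlation of losses'' idea --- re-running the pigeonhole step against a weighted union counted with multiplicities from the original $\F$-cover --- is not formulated: you do not specify the weighting, the cell decomposition, or the inequality that would replace the plain volume pigeonhole, and it is not evident that any such weighting recovers the lost factor $|S|/|H'|$, since the pigeonhole in Lemma~\ref{lem:nu} already works with the union of the original translates and loses the factor $\delta_L$ only through the lattice choice, not through overcounting of overlaps. Your alternative route via Conjecture~\ref{conjtau} is, as you yourself observe, blocked because $\tau(\F)\le 4\nu(\F)-1$ bounds $\tau$ from above by $\nu$ and cannot be inverted to bound $\nu$ from below by $|\F|/|S|$. (As a side remark, your claim that the $3/4$ hexagon sandwich is ``attained in the limit by the disk'' is doubtful --- for the disk the inscribed regular hexagon already gives ratio about $0.827$ --- but this does not affect the main point.) In short, the statement remains a conjecture, and the proposal establishes only the weaker inequality $\nu(\F)\ge\frac{3}{16}\cdot|\F|/|S|$, which was already implicit in the paper.
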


If both conjectures were to hold
(note that they hold for the special cases when $S$ is
a parallelogram or a centrally symmetric convex hexagon
since $\theta_L(S) = \delta_L(S) = 1$ in such cases),
then we would have an alternative proof of essentially the same bound
$\tau(\F) \le 4 \cdot \nu(\F)$
as in Theorem~\ref{thm:greedyt}
for any family $\F$ of translates of a centrally symmetric convex body
in the plane.
Conjecture~\ref{conjnu} is related to another recent
conjecture~\cite{BDJ08} in the spirit of Rado~\cite{Ra49}:

\begin{conjecture}[Bereg, Dumitrescu, and Jiang~\cite{BDJ08}]\label{conj:disks}
For any set $\S$ of (not necessary congruent) closed disks in the plane,
there exists a subset $\I$ of pairwise-disjoint disks such that
$|\I|/|\F| \ge \frac14$.
\end{conjecture}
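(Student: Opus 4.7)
The plan is to attack the conjecture by a radius-ordered greedy selection combined with a refined area-charging argument. I would first order the disks of $\S$ by non-increasing radius and iteratively add each $D_i$ to $\I$ whenever it is disjoint from every disk already chosen. Every rejected $D_j$ meets some chosen $D_i$ with $r(D_i) \ge r(D_j)$, so $D_j$ lies in the concentric disk $3 D_i$ of radius $3 r(D_i)$ and area $9|D_i|$. Since the chosen disks are pairwise disjoint, $|\bigcup \S| \le 9\, |\bigcup \I|$, which gives the weaker ratio $1/9$. The task is to tighten the factor $9$ down to $4$.

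The most promising refinement, in my view, is to group the rejected disks assigned to each chosen $D_i$ into dyadic radius classes $r(D_j) \in (2^{-k-1}, 2^{-k}]\cdot r(D_i)$ for $k = 0,1,2,\ldots$. Within a single class the rejected disks are essentially congruent and their centres lie in a bounded annulus around the centre of $D_i$, so the planar hexagonal packing bound $\delta_L(\mathrm{disk}) = \pi/\sqrt{12}$ controls how much area they can jointly cover; summing a geometric series in $k$ should already yield a constant strictly less than $9$. A parallel fractional route is to set up the LP with variables $x_D \in [0,1]$, constraints $\sum_{D \ni p} x_D \le 1$ for every point $p$, and objective $\sum_D x_D\, |D|$, and to bound its integrality gap by $4$ by exhibiting an explicit dual weight on $\bigcup \S$ of total mass at most $4|D|$ per disk $D$. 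The baseline bound $\tau(\F) \le 7\,\nu(\F)$ for disks from Theorem~\ref{thm:greedyh} could seed either argument as a sanity check on the constant.

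The main obstacle, I expect, is that the conjectured constant $\tfrac14$ appears to be approached by recursive configurations of nested, nearly concentric annuli of many different scales, so any charging that is purely local to a single chosen disk looks stuck above $\tfrac14$. A successful proof will therefore have to amortize area credits across several chosen disks, or use a global potential function, and inventing the right potential is where the genuine difficulty lies. As a more tractable intermediate goal, I would first try to settle the conjecture for the restricted case of disks with two or three distinct radii, where the dyadic grouping collapses to a finite combinatorial problem, in the hope that the successful weighting scheme suggests the right global potential for the general case.
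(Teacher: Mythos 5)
The statement you are addressing is not a theorem of this paper at all: it is an open conjecture (due to Bereg, Dumitrescu, and Jiang, cited as~\cite{BDJ08}), and the paper offers no proof of it --- it is mentioned only to note that, for families of congruent disks, it is equivalent to Conjecture~\ref{conjnu}. So there is no proof in the paper to compare yours against, and your text, by its own admission, is a research plan rather than a proof.

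Concretely, the only step you actually establish is the largest-first greedy bound: every rejected disk meets a chosen disk of at least its radius, hence lies in the concentric triple of that chosen disk, giving $|\bigcup\S| \le 9\,|\bigcup\I|$, i.e.\ the ratio $\tfrac19$. Everything beyond that is speculative. The dyadic-radius-class refinement is not carried out: you assert that the hexagonal packing density $\pi/\sqrt{12}$ ``controls'' the area covered within each class and that ``summing a geometric series in $k$ should already yield a constant strictly less than $9$,'' but no inequality is derived, and in any case a constant strictly less than $9$ is far from the target $4$. The LP/dual route is likewise only named, with no candidate dual weighting exhibited. Most tellingly, you yourself identify the fatal obstruction --- that any charging local to a single chosen disk seems stuck above $\tfrac14$, so that some global amortization or potential function is needed --- and you do not supply one. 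Identifying where the difficulty lies is not the same as overcoming it; as it stands the proposal proves only the $\tfrac19$ bound, and the conjecture remains open (note that the weaker, density-dependent bound $\nu(\F)\ge\frac{\delta_L}{4}\cdot|\F|/|S|$ for translates, i.e.\ Lemma~\ref{lem:nu}, is the best this paper itself offers in that direction).
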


Note that a disk $D$ is centrally symmetric;
for any family $\F$ of congruent disks
(i.e., translates of a disk) in the plane,
$\nu(\F) \ge \frac14 \cdot |\F|/|D|$
if and only if there exists a subset $\I$ of pairwise-disjoint disks such that
$|\I|/|\F| \ge \frac14$.

\paragraph{Approximation algorithms.}
A computational problem related to the results of this paper is
finding a minimum-cardinality point set
that pierces a given set of geometric objects.
This problem is NP-hard even for the special case of
axis-parallel unit squares in the plane~\cite{FPT81},
and it admits a polynomial-time approximation scheme for the general case of fat objects in $\RR^d$~\cite{Ch03}
(see also~\cite{CKL08} for similar approximation schemes
for several related problems).
These approximation schemes
have very high time complexities $n^{O(1/\epsilon^d)}$,
and hence are impractical.
Our methods for obtaining the upper bounds in Theorems \ref{thm:conv},
\ref{thm:symm}, \ref{thm:greedyt}, and \ref{thm:greedyh} are constructive
and lead to efficient constant-factor approximation algorithms
for piercing a set of translates or homothets of a convex body.
The approximation factors,
which depend on the dimension $d$,
are the multiplicative factors in the respective bounds on $\tau(\F)$
in terms of $\nu(\F)$ in the theorems,
see also Table~\ref{tab1} and Table~\ref{tab2}.
For instance, Theorem~\ref{thm:conv} yields a factor-$6$ approximation algorithm
for piercing translates of a convex body in the plane, 
and Theorem~\ref{thm:greedyh} yields a factor-$216$ approximation algorithm
for piercing homothets of a convex body in $3$-space.

\paragraph{Note.}
After completion of this work and shortly before journal submission,
we learned that very recently, Nasz\'odi and Taschuk~\cite{NT09} independently 
obtained some results similar in nature to our Theorems~\ref{thm:greedyh}
and~\ref{thm:lower}.
There are however differences in the specific bounds:

\begin{enumerate}
\item
They proved\footnote{%
Nasz\'odi and Taschuk~\cite{NT09} used the terms
$d\log d + \log\log d + 5d$
instead of
$d\ln d + d\ln\ln d + 5d$
throughout their paper,
which are clearly misprints.
Recall that $\theta_T(C) < d\ln d + d\ln\ln d + 5d$
for any convex body $C$ in $\RR^d$~\cite{Ro57}.}
that
$\beta(C) \le 2^d{2d \choose d}(d\ln d + d\ln\ln d + 5d)$
for any convex body $C$ in $\RR^d$,
and that $\beta(C) \le 3^d(d\ln d + d\ln\ln d + 5d)$
for any centrally symmetric convex body $C$ in $\RR^d$.
Note that their upper bound for the centrally symmetric case
is essentially the same as our bound $\beta(C) \le 3^d \theta_T(C)$
by Theorem~\ref{thm:greedyh} and Lemma~\ref{lem:kappa}.
Their upper bound for the general case, however, is weaker
than our bound $\beta(C) \le \frac{2^d}{d+1} 3^{d+1} \theta_T(C)$,
also by Theorem~\ref{thm:greedyh} and Lemma~\ref{lem:kappa}.
By Stirling's formula,
${2d \choose d} = \frac{(2d)!}{(d!)^2} = \Theta(4^d/\sqrt{d})$.
Compare the factor
$2^d{2d \choose d} = \Theta(8^d/\sqrt{d})$ in their bound
with the factor
$\frac{2^d}{d+1} 3^{d+1} = \Theta(6^d/d)$ in our bound.

\item
They also derived the following lower bound:
for sufficiently large $d$, there
is a convex body $C$ in $\RR^d$ such that $\alpha(C) \ge \frac12(1.058)^d$.
This lower bound is analogous to our exponential lower bound
in Theorem~\ref{thm:lower}: if $C$ is the unit ball $B^d$ in $\RR^d$,
then $\alpha(C) \ge 2^{(0.599 \pm o(1))d} \approx (1.51)^d$ as $d\to\infty$.
Recall that our lower bound for the unit ball $B^d$ follows from 
a general lower bound for any convex body $C$ in $\RR^d$,
namely, $\alpha(C) \ge \frac{\theta_T(C)}{\delta_T(C)}$.
A comparison shows that their lower bound is both weaker
and less general than ours.
\end{enumerate}


\appendix

\section{Lower bound for translates of a triangle}\label{sec:t1}

In this section we prove the lower bound $\alpha_1(T) \ge 3$
for any triangle $T$ by a very simple\footnote{%
Simpler than the previous constructions~\cite{CS67,KT08}
that give the same lower bound.}
construction:

\begin{proposition}\label{prp:triangle}
For any triangle $T$, there exists a family $\F$ of nine translates of $T$
such that $\nu(\F) = 1$ and $\tau(\F) = 3$.
\end{proposition}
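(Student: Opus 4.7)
\medskip

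\noindent\textbf{Proof proposal.}
Since $\nu(\F)$ and $\tau(\F)$ are preserved under any non-singular affine transformation, and since any two triangles are affinely equivalent, I would fix a single convenient triangle $T$ (for example, the equilateral triangle centered at the origin, or a right triangle with vertices $(0,0),(1,0),(0,1)$) and build the nine translates explicitly for that $T$. The proof then reduces to describing nine translation vectors $v_1,\dots,v_9$ and checking three properties: (a)~pairwise intersection (giving $\nu(\F)=1$), (b)~no two points pierce the family (giving $\tau(\F)\ge 3$), and (c)~three points pierce the family (giving $\tau(\F)\le 3$). The whole point of having nine translates is to make the lower bound argument rigid despite being as symmetric as possible.

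The plan is to build $\F$ with an approximate $3$-fold structure: arrange the nine translates as three \emph{clusters} of three, with cluster centers $c_1,c_2,c_3$ forming a small triangle near the origin, and with the three members of each cluster obtained from $T+c_i$ by slight perturbations in three carefully chosen directions (one direction per side or per vertex of $T$). The perturbations inside cluster $i$ are to be chosen small enough that all nine translates pairwise intersect (this is easy since all of them then lie close to $T$ itself), yet large enough that the three translates in each cluster have no common point. In this way, piercing cluster $i$ alone already requires a point lying in a specific small ``admissible region'' $R_i$ that is forced to sit near the cluster center $c_i$, and $c_1,c_2,c_3$ are placed so that $R_1,R_2,R_3$ are pairwise disjoint. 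If two points could pierce $\F$, then by pigeonhole one of them would have to lie in two of the $R_i$'s, contradicting disjointness; this gives $\tau(\F)\ge 3$. The upper bound $\tau(\F)\le 3$ is immediate by picking one piercing point per cluster.

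The first real step is to pin down three pairwise-intersecting translates of $T$ with empty common intersection; for a triangle this is standard and can be done by translating $T$ slightly along each of its three outward-normal directions by the same small amount, so that the three translates cover a ``hole'' near the center. Iterating the perturbation once more (perturbing each of these three translates slightly along the three normals again) produces the nine translates, whose pairwise intersections are preserved by continuity and whose clusterwise common intersections are empty. The three admissible regions $R_i$ are then identified with the three pairwise intersections of the perturbed triangles in cluster $i$, and a short direct calculation places them near three distinct corners of a small triangle.

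The main obstacle will be choosing the perturbation magnitudes so that two conditions compete successfully: the translates in different clusters must still intersect pairwise (forcing small perturbations and small separation of cluster centers), while the admissible piercing regions $R_1,R_2,R_3$ must be disjoint (pushing in the opposite direction). I expect this to come down to a one-parameter family of constructions in which a single ``perturbation size'' $\varepsilon$ can be taken small enough that pairwise intersection is preserved while the regions $R_i$ remain of size $O(\varepsilon)$ around $c_i$, and the cluster centers are placed at mutual distance $\Theta(\varepsilon)$ but in such a way that the three regions still separate. Verifying this cleanly — probably by computing a single determinant or by comparing the three normal directions of $T$ to the three displacement vectors $c_i-c_j$ — will be the technical heart of the proof.
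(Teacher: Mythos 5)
There is a genuine gap, and it sits at the core mechanism of your construction. You want each cluster to consist of three translates of $T+c_i$ obtained by ``slight perturbations'', pairwise intersecting but with empty common intersection. This is geometrically impossible: for translates $T+u_1$, $T+u_2$, $T+u_3$ of any convex body, if the pairwise differences $|u_j-u_k|$ are smaller than the inradius $r$ of $T$, then the point $o+u_1$ (where $o$ is the incenter) lies in all three translates, since $o+(u_1-u_k)\in T$ for each $k$. So small perturbations can never destroy the common point; to get an intersecting triple with empty common intersection the translation vectors must be spread at the scale of $T$ itself, which contradicts your requirement that each cluster stay ``close to $T$'' so that all nine translates pairwise intersect. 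The same obstruction invalidates your first concrete step (translating $T$ slightly along its three outward normals does \emph{not} create a hole near the center --- the incenter remains in all three translates) and the localization claim that follows: the pairwise intersections of slightly perturbed translates are large regions, nearly all of $T+c_i$, not small sets near $c_i$, so the admissible regions $R_1,R_2,R_3$ cannot be made pairwise disjoint and the pigeonhole step has nothing to act on.

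The paper's construction resolves exactly this tension, but with the roles of ``big'' and ``small'' displacements reversed relative to your plan. It starts from three pairwise-\emph{tangent} translates $A,B,C$ (translation vectors comparable to the size of $T$, so each pairwise intersection is a single vertex $a$, $b$, or $c$ --- the strongest possible localization, achieving what you hoped the small regions $R_i$ would do), and then adds six translates obtained by perturbing each of $A,B,C$ by a tiny $\epsilon$ toward each of the other two. Within a cluster such as $\{A,A_B,A_C\}$ the three translates \emph{do} share points (consistent with the inradius obstruction); the empty intersections occur across clusters: $A$, $B_A$, $C_A$ have no common point for small $\epsilon$. The lower bound then reads: two piercing points must include a tangency vertex, say $a$, because $A,B,C$ meet only at $a,b,c$; but the members avoiding $a$ include $A,B_A,C_A$, which no single second point can pierce. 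If you want to salvage your cluster idea, you would have to let the within-cluster spread be of order the size of $T$, at which point you are essentially forced back to the tangent configuration.
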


\begin{figure}[htbp]
\psfrag{A}{$A$}
\psfrag{B}{$B$}
\psfrag{C}{$C$}
\psfrag{a}{$a$}
\psfrag{b}{$b$}
\psfrag{c}{$c$}
\centering
\includegraphics{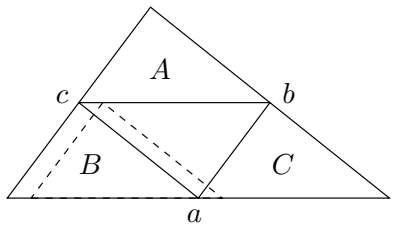}
\caption{\small Three pairwise-tangent translates $A$, $B$, and $C$ of
a triangle $T$. The dashed triangle is $B_C$.}
\label{fig:t3}
\end{figure}

\begin{proof}
We refer to Figure~\ref{fig:t3}.
Let $A$, $B$, and $C$ be three translates of $T$ that are pairwise-tangent
with intersections at three vertices $a$, $b$, and $c$.
We obtain six more translates of $T$ as follows.
Translate a copy of $T$ for a short distance $\epsilon$ from $B$ toward $C$,
and let $B_C$ be the resulting translate.
Similarly obtain $A_B$, $A_C$, $B_A$, $C_A$, and $C_B$.
Let $\F$ be the family of nine translates
$A$, $B$, $C$, $A_B$, $A_C$, $B_C$, $B_A$, $C_A$, and $C_B$.
It is clear that any two members of $\F$ intersect.
We next show that three points are necessary to pierce all members of $\F$.
Suppose for contradiction that
two points are enough.
Then one of the two points must be $a$, $b$, or $c$
since $A$, $B$, and $C$ are pairwise-tangent.
Assume that $a$ is one of the two points.
Then the other point must intersect the three translates $A$, $B_A$, and $C_A$
that do not contain the point $a$.
But these three translates do not have a common point
when $\epsilon$ is sufficiently small.
We have reached a contradiction.
\end{proof}

By repeating the configuration of nine translates
in Proposition~\ref{prp:triangle},
we can obtain a family $\F$ of $9\,\nu(\F)$ translates of a triangle
such that $\tau(\F) = 3\cdot \nu(\F)$ for any $\nu(\F) \ge 1$.


\begin{thebibliography}{99}

\bibitem{Al98}
N. Alon,
Piercing $d$-intervals,
\emph{Discrete and Computational Geometry},
\textbf{19} (1998),
333--334.

\bibitem{AK92}
N. Alon and D.J. Kleitman,
Piercing convex sets and the Hadwiger-Debrunner $(p,q)$-problem,
\emph{Advances in Mathematics},
\textbf{96} (1992),
103--112.

\bibitem{BDJ08}
S. Bereg, A. Dumitrescu, and M. Jiang,
Maximum area independent sets in disk intersection graphs,
\emph{International Journal of Computational Geometry \& Applications},
to appear.

\bibitem{BDJ08b}
S. Bereg, A. Dumitrescu, and M. Jiang,
On covering problems of Rado,
\emph{Algorithmica}, \texttt{doi:10.1007/s00453-009-9298-z}, to appear.

\bibitem{BMP05}
P. Bra\ss, W. Moser, and J. Pach,
\emph{Research Problems in Discrete Geometry},
Springer, New York, 2005.

\bibitem{CKL08}
P. Carmi, M.J. Katz, and N. Lev-Tov,
Polynomial-time approximation schemes for piercing and covering
with applications in wireless networks,
\emph{Computational Geometry: Theory and Applications},
\textbf{39} (2008),
209--218.

\bibitem{CS67}
G.D. Chakerian and S.K. Stein,
Some intersection properties of convex bodies,
\emph{Proceedings of the American Mathematical Society},
\textbf{18} (1967),
109--112.

\bibitem{Ch03}
T. Chan,
Polynomial-time approximation schemes for packing and piercing fat objects, 
\emph{Journal of Algorithms},
\textbf{46} (2003), 178--189.

\bibitem{Da86}
L. Danzer,
Zur L\"osung des Gallaischen Problems \"uber Kreisscheiben in der
Euklidischen Ebene, 
\emph{Studia Scientiarum Mathematicarum Hungarica},
\textbf{21} (1986),
111--134.

\bibitem{DGK63}
L. Danzer, B. Gr\"unbaum, and V. Klee,
Helly's theorem and its relatives,
in: \emph{Proceedings of Symposia in Pure Mathematics},
vol.~7, American Mathematical Society, 1963,
pp.~101--181.

\bibitem{FK93}
D.G. Fon-Der-Flaass and A.V. Kostochka,
Covering boxes by points,
\emph{Discrete Mathematics},
\textbf{120} (1993),
269--275.

\bibitem{FPT81}
R.J. Fowler, M.S. Paterson, and S.L. Tanimoto,
Optimal packing and covering in the plane are NP-complete,
\emph{Information Processing Letters},
\textbf{12} (1981),
133--137.

\bibitem{Fr}
E. Friedman,
Circles covering circles,
\texttt{http://www.stetson.edu/$\sim$efriedma/circovcir/}.

\bibitem{Gr59}
B. Gr\"unbaum,
On intersections of similar sets,
\emph{Portugaliae Mathematica},
\textbf{18} (1959),
155--164.

\bibitem{GL85}
A. Gy\'arf\'as and J. Lehel,
Covering and coloring problems for relatives of intervals,
\emph{Discrete Mathematics},
\textbf{55} (1985),
167--180.

\bibitem{KL78}
G.A. Kabatjanski\u{\i} and V.I. Leven\v{s}te\u{\i}n,
Bounds for packing on a sphere and in space (in Russian),
\emph{Problemy Pereda\v{c}i Informacii},
\textbf{14} (1978),
3--25.
English translation:
\emph{Problems of Information Transmission},
\textbf{14} (1978),
1--17.

\bibitem{Ka97}
T. Kaiser,
Transversals of $d$-intervals,
\emph{Discrete and Computational Geometry},
\textbf{18} (1997),
195--203.

\bibitem{Ka00}
R.N. Karasev,
Transversals for families of translates of a two-dimensional convex compact set,
\emph{Discrete and Computational Geometry},
\textbf{24} (2000),
345--353.

\bibitem{Ka91}
G. K\'arolyi,
On point covers of parallel rectangles,
\emph{Periodica Mathematica Hungarica},
\textbf{23} (1991),
105--107.

\bibitem{KT96}
G. K\'arolyi and G. Tardos,
On point covers of multiple intervals and axis-parallel rectangles,
\emph{Combinatorica},
\textbf{16} (1996),
213--222.

\bibitem{KNPS06}
S.-J. Kim, K. Nakprasit, M.J. Pelsmajer, and J. Skokan,
Transversal numbers of translates of a convex body,
\emph{Discrete Mathematics},
\textbf{306} (2006),
2166--2173.

\bibitem{Ku87}
W. Kuperberg,
An inequality linking packing and covering densities of plane convex bodies,
\emph{Geometriae Dedicata},
\textbf{87} (1987),
59--66.

\bibitem{KT08}
J. Kyn\v{c}l and M. Tancer,
The maximum piercing number for some classes
of convex sets with $(4, 3)$-property, 
\emph{Electronic Journal of Combinatorics},
\textbf{15} (2008),
\#R27.

\bibitem{Ma01}
J. Matou\v{s}ek,
Lower bounds on the transversal numbers of $d$-intervals,
\emph{Discrete and Computational Geometry},
\textbf{26} (2001),
283--287.

\bibitem{NT09}
M. Nasz\'odi and S. Taschuk,
On the transversal number and VC-dimension of families of positive homothets of a convex body,
\emph{Discrete Mathematics},
\texttt{doi:10.1016/j.disc.2009.07.030},
to appear.

\bibitem{Ni01}
F. Nielsen,
On point covers of $c$-oriented polygons,
\emph{Theoretical Computer Science},
\textbf{263} (2001),
17--29.

\bibitem{Ra49}
R. Rado,
Some covering theorems (I),
\emph{Proceedings of the London Mathematical Society},
\textbf{51} (1949),
241--264.

\bibitem{Ro57}
C.A. Rogers,
A note on coverings,
\emph{Mathematika},
\textbf{4} (1957),
1--6.


\bibitem{Sc63}
W.M. Schmidt,
On the Minkowski-Hlawka theorem,
\emph{Illinois Journal of Mathematics},
\textbf{7} (1963),
18--23.

\bibitem{Sm06}
E. Smith,
An improvement of an inequality linking packing and covering densities in 3-space,
\emph{Geometriae Dedicata},
\textbf{117} (2006),
11--18.

\bibitem{Ta95}
G. Tardos,
Transversals of 2-intervals, a topological approach,
\emph{Combinatorica},
\textbf{15} (1995),
123--134.

\bibitem{We04}
R. Wenger,
Helly-type theorems and geometric transversals,
in: \emph{Handbook of Discrete and Computational Geometry}, 2nd edition,
CRC Press, 2004,
pp.~73--96.

\end{thebibliography}
\end{document}